\documentclass[aps,pra,nopacs,nokeys,superscriptaddress,11pt,twoside,notitlepage,a4paper]{revtex4-2}

\usepackage{graphicx,epic,eepic,epsfig,amsmath,latexsym,amssymb,verbatim,color,mathtools}

\usepackage{theorem}
\newtheorem{definition}{Definition}

\newtheorem{lemma}[definition]{Lemma}

\newtheorem{theorem}[definition]{Theorem}

\usepackage{tikz-cd}

\usepackage{diagbox}

\def\squareforqed{\hbox{\rlap{$\sqcap$}$\sqcup$}}
\def\qed{\ifmmode\squareforqed\else{\unskip\nobreak\hfil
\penalty50\hskip1em\null\nobreak\hfil\squareforqed
\parfillskip=0pt\finalhyphendemerits=0\endgraf}\fi}
\def\endenv{\ifmmode\;\else{\unskip\nobreak\hfil
\penalty50\hskip1em\null\nobreak\hfil\;
\parfillskip=0pt\finalhyphendemerits=0\endgraf}\fi}
\newenvironment{proof}{\noindent \textbf{{Proof~}}}{\hfill$\blacksquare$}
\newenvironment{remark}{\noindent \textbf{{Remark~}}}{\qed}
\mathchardef\ordinarycolon\mathcode`\:
\mathcode`\:=\string"8000
\def\vcentcolon{\mathrel{\mathop\ordinarycolon}}
\begingroup \catcode`\:=\active
  \lowercase{\endgroup
  \let :\vcentcolon
  }

\newcommand{\nc}{\newcommand}
\nc{\rnc}{\renewcommand}
\nc{\lbar}[1]{\overline{#1}}
\nc{\bra}[1]{\langle#1|}
\nc{\ket}[1]{|#1\rangle}
\nc{\ketbra}[2]{|#1\rangle\!\langle#2|}
\nc{\braket}[2]{\langle#1|#2\rangle}
\nc{\proj}[1]{| #1\rangle\!\langle #1 |}
\nc{\avg}[1]{\langle#1\rangle}
\nc{\rank}{\operatorname{rank}}
\nc{\smfrac}[2]{\mbox{$\frac{#1}{#2}$}}
\nc{\tr}{\operatorname{Tr}}
\nc{\Tr}{\operatorname{Tr}}
\nc{\ox}{\otimes}
\nc{\dg}{\dagger}
\nc{\dn}{\downarrow}
\nc{\cA}{{\cal A}}
\nc{\cB}{{\cal B}}
\nc{\cC}{{\cal C}}
\nc{\cD}{{\cal D}}
\nc{\cE}{{\cal E}}
\nc{\cF}{{\cal F}}
\nc{\cG}{{\cal G}}
\nc{\cH}{{\cal H}}
\nc{\cI}{{\cal I}}
\nc{\cJ}{{\cal J}}
\nc{\cK}{{\cal K}}
\nc{\cL}{{\cal L}}
\nc{\cM}{{\cal M}}
\nc{\cN}{{\cal N}}
\nc{\cO}{{\cal O}}
\nc{\cP}{{\cal P}}
\nc{\cQ}{{\cal Q}}
\nc{\cR}{{\cal R}}
\nc{\cS}{{\cal S}}
\nc{\cT}{{\cal T}}
\nc{\cU}{{\cal U}}
\nc{\cV}{{\cal V}}
\nc{\cW}{{\cal W}}
\nc{\cX}{{\cal X}}
\nc{\cY}{{\cal Y}}
\nc{\cZ}{{\cal Z}}
\nc{\csupp}{{\operatorname{csupp}}}
\nc{\qsupp}{{\operatorname{qsupp}}}
\nc{\var}{{\operatorname{var}}}
\nc{\Var}{{\operatorname{Var}}}
\nc{\rar}{\rightarrow}
\nc{\lrar}{\longrightarrow}
\nc{\polylog}{{\operatorname{polylog}}}
\nc{\1}{{\openone}}
\nc{\wt}{{\operatorname{wt}}}
\nc{\av}[1]{{\left\langle {#1} \right\rangle}}

\def\ph{\varphi}

\nc{\RR}{{{\mathbb R}}}
\nc{\CC}{{{\mathbb C}}}
\nc{\FF}{{{\mathbb F}}}
\nc{\NN}{{{\mathbb N}}}
\nc{\ZZ}{{{\mathbb Z}}}
\nc{\PP}{{{\mathbb P}}}
\nc{\QQ}{{{\mathbb Q}}}
\nc{\UU}{{{\mathbb U}}}
\nc{\EE}{{{\mathbb E}}}
\nc{\id}{{\operatorname{id}}}

\nc{\CHSH}{{\operatorname{CHSH}}}

\nc{\<}{\langle}
\rnc{\>}{\rangle}
\nc{\Hom}[2]{\mbox{Hom}(\CC^{#1},\CC^{#2})}
\nc{\rU}{\mbox{U}}

\nc{\ob}[1]{#1}

\nc{\SEP}{{\text{SEP}}}
\nc{\NS}{{\text{NS}}}
\nc{\LOCC}{{\text{LOCC}}}
\nc{\PPT}{{\text{PPT}}}
\nc{\EXT}{{\text{EXT}}}
\nc{\Sym}{{\operatorname{Sym}}}

\nc{\ERLO}{{E_{\text{r,LO}}}}
\nc{\ERLOCC}{{E_{\text{r,LOCC}}}}
\nc{\ERPPT}{{E_{\text{r,PPT}}}}
\nc{\ERLOCCinfty}{{E^{\infty}_{\text{r,LOCC}}}}
\nc{\Aram}{{\operatorname{\sf A}}}

\newlength{\blank}
\begin{document}

\title{Playing Bayesian games better with separable quantum states\protect\\ than with any classical correlation}

\date{10 July 2026}

\author{Yaqing Xy Wang}
\affiliation{Department Mathematik/Informatik--Abteilung Informatik, Universit\"at zu K\"oln, Albertus-Magnus-Platz, 50923 K\"oln, Germany}
\affiliation{Institute for Theoretical Physics, Universit\"at zu K\"oln, 50937 K\"oln, Germany}
\affiliation{Forschungszentrum J\"ulich, Institute of Quantum Control, Peter Gr\"unberg Institut (PGI-8), 
52425 J\"ulich, Germany}
\email{ywang51@smail.uni-koeln.de}

\author{Giannicola Scarpa}
\affiliation{Escuela T\'ecnica Superior de Ingenier\'ia de Sistemas Inform\'aticos, Universidad Polit\'ecnica de Madrid, Calle Alan Turing, 28031 Madrid, Spain}
\email{g.scarpa@upm.es}

\author{Andreas Winter}
\affiliation{Department Mathematik/Informatik--Abteilung Informatik, Universit\"at zu K\"oln, Albertus-Magnus-Platz, 50923 K\"oln, Germany}
\affiliation{ICREA {\&} Grup d'Informaci\'o Qu\`antica, Departament de F\'isica, Universitat Aut\`onoma de Barcelona, 08193 Bellaterra (BCN), Spain}
%\affiliation{ICREA---Instituci\'o Catalana de Recerca i Estudis Avan{\c{c}}ats, Pg.~Llu\'is Companys, 23, 08010 Barcelona, Spain}
\email{andreas.winter@uni-koeln.de}

\begin{abstract}
Bayesian games, also known as games of incomplete information, are a fruitful arena for exploring the impact of correlations on a set of independent agents (players) via the game equilibria to which they give rise. 
It was realised some time ago that quantum states shared between the players can lead to new and beneficial equilibria, compared to classical correlation. While until now examples of this effect required an entangled state, here we show that even separable states can create new, genuinely quantum equilibria in games, that are advantageous with respect to all classically correlated equilibria. This shows that non-classical correlations beyond entanglement are indeed a resource, even in otherwise entirely classical situations. Our result brings quantum advantage in games significantly 
closer to possible realisation. 
It also illuminates and differentiates the existing hierarchy of ``legitimate notions of equilibrium'' in Bayesian games. 
\end{abstract}

\maketitle

\section{Introduction}
\label{sec:intro}
Bell inequalities \cite{Bell:inequality,CHSH}, which in the present work we like to view as nonlocal games with a common payoff function for all players, have famously shown that entangled states provide correlations not reproducible by shared classical information (``local hidden variables'' \cite{EPR}). More precisely, every pure entangled state provides a quantum correlation advantage in some nonlocal game \cite{Gisin:pure,PopescuRohrlich:pure}, while on the other hand no separable -- and indeed even some very mixed entangled states -- can possibly achieve the same \cite{Werner:separable}. From this point of view, separable quantum correlation is no different from classical correlation between separate parties. This is enshrined in the resource framework of quantum entanglement, cf.~\cite{Horodecki4}. 

Here, however, we will show that the picture changes dramatically when extending the framework from the cooperative games of Bell type to competitive games, where each player has their own payoff function (utility), whose expectation they are trying to maximise individually and which objective is in general in conflict with the other players' objectives \cite{vonNeumann:minimax,vonNeumannMorgenstern,Nash:equilibrium}. Because of this, the crucial notion in games is that of \emph{(Nash) equilibrium}, a kind of local optimum where no player can improve their own payoff unilaterally while the other players adhere to their strategies. Such games, with inputs (``types'') to which the players have to respond (``actions'') just as the above games of Bell type, have been considered in game theory and economics since the work of Harsanyi in the 1960s \cite{Harsanyi:anf} under the heading of \emph{games with incomplete information} or \emph{Bayesian games}. On the other hand, Aumann \cite{Aumann:correlated,Aumann:more-correlated} showed the impact of correlation on the equilibrium structure of games %of complete information 
(see also the precursor \cite{Raiffa:PhD}), while Forges unified these two strands in her pioneering work on correlated equilibria in Bayesian games \cite{Forges:5-legitimate-defs,Forges:corr-eq-revisited}. These models developed in an economic context finally meet the requirements to describe nonlocal games in the sense of Bell, and in particular the study of optimal game play using classical or quantum correlations, or indeed no-signalling (``belief-invariant'') correlations. What is missing in the latter is the element of competition, but since the breakthrough works of La Mura \cite{LaMura:game+quantum} and Pappa \emph{et al.} \cite{Pappa:CHSH-game}, a slowly growing literature is dedicated to quantum and beyond-quantum correlated equilibria, their superiority to classical correlation, and the separation of the different classes of correlated equilibria. While \cite{BigGame} proposed a unified theoretical framework for different kinds of correlation advice in Bayesian games and explored potentially practical cases of quantum advantage, several noteworthy papers have studied particular games or classes of games with advantageous quantum correlated equilibria: 
Bolonek-Laso\'n \cite{Bolonek-Lason:3-player-quantum,Bolonek-Lason:2-player-quantum}, Groisman \emph{et al.}~\cite{Groisman-et-al:pseudotelepathy}, Abbott \emph{et al.}~\cite{AMP:corr-vs-comm}, Cerd\`a \cite{Miquel:BSc,MiquelAndreas}, among others.
%\textcolor{red}{(We'd like a complete bibliography of quantum correlation advantage here, evidently without mentioning anything to do with EWL...)} 
All of these examples of quantum correlated equilibria with social welfare superior to that attainable by classically correlated equilibria depend on sharing a highly entangled state among the players, and indeed the increased social welfare is itself a Bell inequality violation. 

In the present work, answering questions raised indirectly in \cite{BigGame} (see also \cite{Jabir:MSc}), we show by explicit construction that there are Bayesian games for which separable quantum states give rise to equilibria outside the set of (classically) correlated equilibria, and indeed that the quantum state can help the players achieve a larger social welfare than any classically correlated equilibrium. 
Since the behaviour originating from the separable state is necessarily local, this also gives us the first example of a communication equilibrium that is local (hence it could in principle be prepared using a suitable shared random variable) but does not correspond to any correlated equilibrium, adding to the list of legitimate yet subtly different notions of correlated equilibrium in Bayesian games \cite{Forges:5-legitimate-defs,Forges:corr-eq-revisited}. 

\medskip
The rest of the paper is structured as follows. 
In Section \ref{sec:games+equilibria}, we introduce the necessary notations, and review the mathematical definitions of games and their (correlated) equilibria; in Section \ref{sec:main}, we present the game construction and the main results; in Section \ref{sec:numerics}, we construct competitive games by modifying the simplest nonlocal games with quantum advantage, the CHSH game, the Peres-Mermin magic square game and the GHZ game, and optimise their correlated equilibria numerically; we conclude in Section \ref{sec:conclusions}. Appendix \ref{app:legitimate} presents our updated landscape of now eleven(!) different notions of correlated equilibrium in Bayesian games.

\bigskip\noindent
\textbf{General notations.}
For ease of reading, we follow certain conventions for sets, their elements and random variables, as well as tuples (strings) of symbols: 

An \emph{alphabet} is simply a set $\cX$ ($\cY$, $\cZ$, etc), usually finite, of elements, sometimes called letters $x, x', x_i\in\cX$, $y\in \cY$, etc. The cardinality (number of elements) of a set $\cX$ is denoted $|\cX|$. 

\emph{Random variables} are denoted by capital letters $X$, $Y$, $Z$ and so on, taking values $x$, $y$, $z$, etc in alphabets $\cX$, $\cY$, $\cZ$, etc. A random variable always comes with its own distribution, even if it is not spelled out explicitly; the probabilities of events are referred to by $\Pr\{X=x\}$, $\Pr\{X\neq Y\}$ and more generally $\Pr\{X\in\cE\}$ for an event $\cE\subset\cX$. If required, we denote the distribution of $\mathbb{\cX}$ by $p$, and then say that $X$ has distribution $p$, symbolically $X\sim p$, meaning $\Pr\{X=x\} = p(x)$; note that since we are in the discrete setting, we will not distinguish between the probability distribution proper and the so-called probability mass function (aka probability vector). 

The \emph{total variation distance} between distributions $p$ and $q$ on the same alphabet $\cX$ is a norm, defined as 
\[
  \frac12\|p-q\|_1 := \sup_{\cA\subset\cX} p(\cA) - q(\cA) = \frac12 \sum_{x\in\cX} |p(x)-q(x)|.
\]

Finally, for alphabets $\cX_1,\ldots,\cX_n$, the Cartesian product $\cX := \cX_1 \times \cdots \times \cX_n$ is the \emph{set of $n$-tuples (or strings)}, denoted $x = (x_1,\ldots,x_n)$, if there is no danger of confusion with a product of numbers, even more compactly as $x = x_1\ldots x_n$. For a subset $I\subset[n]=\{1,\ldots,n\}$, $\cX_I = \prod_{i\in I} \cX_i$ is the set of tuples/strings indexed by $I$, its elements being $x_I = (x_i:i\in I)$, the projections of $x\in\cX$ onto the coordinates $I$. In the special case $I=[n]\setminus i$, we employ the common abbreviation $x_{-i} = x_{[n]\setminus i}$ for the tuple of all coordinates of $x$ except the $i$-th.

\section{Bayesian games and hierarchy of correlated equilibria}
\label{sec:games+equilibria}
%\subsection{Bayesian Game}
The exposition in the present section is necessarily short, but we refer the reader to the comprehensive textbook \cite{MaschlerSiolanZamir:GameTheory} for all the necessary background on game theory, the review article \cite{Brunner-et-al:nonlocality} and the excellent book \cite{Scarani} for the distinction between classical and quantum correlated behaviours, and \cite{BigGame} for the formalism of correlated equilibria incorporating quantum and no-signalling advice. 

\begin{definition}[{Harsanyi~\cite{Harsanyi:anf}, see~also~\cite{MaschlerSiolanZamir:GameTheory}}]
A Bayesian game $G$, or game of incomplete information, is given by the following data: 
\begin{itemize}
    \item $n$ players labelled $i\in [n] = \{1,\ldots,n\}$;
    \item the set $\cT:= \cT_1 \times \cT_2 \times \cdots \times \cT_n$ of type profiles;
    \item the set $\cA := \cA_1 \times \cA_2 \times \cdots \times \cA_n$ of action profiles;
    \item a prior probability distribution $p$ on the type profiles $t = t_1\ldots t_n \in \cT$, making the type profile $T=T_1\ldots T_n \sim p$ a random variable;
    \item payoff functions $u_i : \cT \times \cA \to \mathbb{R}$ for each $i\in[n]$.
\end{itemize}
\end{definition}

We next describe the game play in \emph{extensive form} (which lends itself more fruitfully to the incorporation of advice than the so-called strategic form).
The game $G$ starts with player types being sampled from the distribution $p$ where each player $i$ receives only their own type $t_i$ and has no information on the other players' types (beyond the known correlation through $p$). Players then need to individually come up with their actions $a_i$ based on the information they know (their own types and all the above parameters of the game, plus potential advice as introduced later), at which point the payoffs $u_i(t,a)$ are determined. 
In the absence of other input (such as correlated random variables, quantum states, etc, to be discussed presently), a pure strategy of player $i$ amounts to a function $g_i:\cT_i\rightarrow\cA_i$, the set of which is denoted $\cA_i^{\cT_i}$. In this \emph{strategic form} the game is now a game of complete information, with the $i$-th player's payoff function given by $u_i(g) = \EE u_i(T,g(T))$, where $g=g_1\ldots g_n$ and $g(t) = g_1(t_1)\ldots g_n(t_n)$. A mixed strategy of player $i$ is now simply a random variable $G_i \in \cA_i^{\cT_i}$. In particular, Nash's theorem applies, guaranteeing the existence of an equilibrium in (independent) mixed strategies; this will be revisited briefly below. 

%This process can be mathematically represented by a map from the information known to them (for Bayesian games only their own types $t_i$) to an element of the set of actions accessible to them $a_i$, \emph{i.e.} $g_i : \cT_i \to \cA_i.$ When the function deterministically maps a type to an action, the strategy is called a pure strategy. When the play draws $g_i$ from a probability distribution of a set of such strategies, we call the strategy a mixed strategy. To model the probabilistic mixed strategies of the players, we can also introduce local random variables $\lambda$ with distributions $\Lambda_i(\lambda_i)$ for all the strategies a player can adopt, such that $g_i = g_{i,\lambda_i}.$ We denote the player's conditional probability distribution of player $a_i \in A_i$ while given the type $t_i \in T_i$ by $g_{i,A|T}(a_i \vert t_i).$
%When we specify the strategies for all players $\textbf{g} = (g_1,g_2, \dots,g_n)$, we have a solution for a game.

%The players are then rewarded for their actions based on the payoff functions and their types. When player i receives type $t_i$, adopts a strategy that yields $g_{i,A|T}(a_i \vert t_i)$, they are rewarded $u_i (\textbf{t},\textbf{a}),$ where $\textbf{t}$ denotes all sampled player types for the game instance and $\textbf{a}$ all players' actions.

In the settings considered in the sequel, each action $a_i$ is no longer going to be a simple (deterministic or random) function of $t_i$, but the action profile $a=a_1\ldots a_n$ still has a well-defined probability distribution conditional on $t$, denoted $Q(a|t)$. This suffices to make types and actions into jointly distributed random variables, 
\[
  \Pr\{T=t,A=a\} = p(t)Q(a|t). 
\]
This allows us to consider the expected utility of a player, as it tells us the average value of reward the player may expect in this scenario:
\[
  \langle u_i \rangle = \EE u_i(T,A). 
\]
%where $\textbf{t}_{-i}$ denote the types of all players except player $i$. We will continue to use ${}_{-i}$ as a subscript to denote quantities of all players except the $i$-th.

We now make the distinction between a full coordination game and a game of conflicting interests. The former is a game where all players' payoff functions are the same, which means that they always favour the same outcome game situation $(t,a)$. The latter is a game where the players' payoff functions differ for a non-empty set of type profiles, \emph{i.e.}, there exists a type profile $t$ and action profiles $a \neq a'$ such that for two players $i$ and $j$, $a$ maximises player $i$'s expected utility, while $a'$ maximises player $j$'s.

\emph{Social welfare} is an indicator commonly evaluated for game equilibria. In this paper, it is defined to be the average of player utilities:
\begin{equation}
\label{eq:social_welfare}
  \text{SW}(Q) = \frac1n \sum_{i=1}^n \EE u_i(T,A).
\end{equation}

\subsection{Behaviours (aka ``correlations'')}
The statistical entity that encapsulates the players' response to the types is the conditional probability distribution $Q$ we call \emph{behaviour} (sometimes also referred to as ``correlation'', which however has already too many distinct meanings): 
moving temporarily away from the types and actions of above, $n$ agents encounter inputs $r_i\in\cR_i$ and are expected to respond with outputs $s_i\in\cS_i$, so that given input $r=r_1\ldots r_n$, the output $s=s_1\ldots s_n$ is seen with probability $Q(s_1\dots s_n|r_1\dots r_n)$. In the context of games, we think of $Q$ as embodying `advice', in some way or another provided by a mediator. Mathematically, all we require at this stage is $Q(s|r) \geq 0$ for all $s$ and $r$, and 
$\sum_s Q(s|r) = 1$ for all $r$. 
We denote the family of all such behaviours on input set $\cR$ and output set $\cS$ as $\textbf{ALL}(\cS\vert \cR).$

\subsubsection{Belief-invariance}
Out of all possible behaviours, there are sub-categories with particularly desired qualities. Belief-invariant, also called non-signalling, behaviours are those where the distribution of the outputs $s_i$ (given $r_i$) reveals no information on any of the other players' inputs $r_j$ ($j\neq i$).
Concretely, we call $Q$ \emph{belief-invariant for a subset $I \subset [n]$} of agents compared to the rest $J = [n]\setminus I$, if 
\[
  \forall s_I \in \cS_I, r_I \in \cR_I, r_J, r_J' \in \cR_J \quad 
  \sum_{s_J \in \cS_J} Q(s_I,s_J | r_I,r_J) = \sum_{s_J \in \cS_J}Q(s_I,s_J | r_I,r_J'), 
\]
meaning that there is a well-defined marginal behaviour $Q_I(s_I|r_I) = \sum_{s_J \in \cS_J} Q(s_I,s_J | r_I,r_J)$ of the parties in $I$. We call $Q$ simply belief-invariant (or no-signalling) if it has this property for all subsets $I\subset[n]$.
We denote the set of belief-invariant behaviours as $\textbf{BINV}(\cS \vert \cR)$.

\subsubsection{Locality}
Another important aspect of behaviours is locality. A conditional probability distribution $Q$ is called \emph{local} (or more historically accurate, described by local hidden variables) if the players can locally produce their output given their input as well as a shared random variable $\lambda$ distributed independently of the inputs $r$ according to a probability law $\mu$: 
%where $\boldsymbol{\lambda} = (\lambda_1, \dots , \lambda_n)$ is a variable shared among all players sampled from a certain distribution $\Lambda(\boldsymbol{\lambda})$ and itself independent of the players' inputs $\textbf{r}$. Denoting the players' local conditional distributions by $q_{i,\boldsymbol{\lambda}}$,
\[
  Q(s \vert r) 
    = \sum_{\lambda} \mu(\lambda) \prod_i Q_i(s_i | r_i,\lambda),
\]
where $Q_i(s_i|r_i,\lambda)$ is the conditional probability distribution used by party $i$.

Almost by definition, all local correlations are also belief-invariant, while the opposite is not necessarily the case. We denote the set of local correlations $\textbf{LO}(\cS|\cR)$.

\subsubsection{Quantumness}
Based on the laws of quantum mechanics, we see that it is also possible to generate correlations via quantum states and measurements. For quantum systems, the native description of a measurement is a POVM (positive operator valued measure). Consider a quantum state $\rho$ acting on a Hilbert space $\cH = \cH_1 \otimes \dots \otimes \cH_n$ that is the tensor product of all $n$ players' Hilbert spaces. In the player subsystems, we can consider sets of POVMs labelled by the superscript $r_i$. Each element of the set is a positive semi-definite matrix in $\cH_i$ that acts on the individual state according to Born's rule: 
\[
  \Pr\{s_i | r_i, \rho\} = \Tr \rho M^{r_i}_{s_i}, 
\]
where 
\(
  M^{r_i} = \bigl( M^{r_i}_{s_i} : s_i\in\cS_i \bigr)
\)
is a POVM, \emph{i.e.} for all $s_i$, $M^{r_i}_{s_i} \geq 0$ and $\sum_{s_i \in \cS_i} M^{r_i}_{s_i} = \1$. 
%After the measurement, the state of player $i$ collapses to the post-measurement state $\rho_i'= \frac{ M^{r_i}_{s_i} \rho_i}{\text{Pr}(s_i \vert r_i, \rho_i)}.$
When the players each apply their POVM that depends on the input $r_i$ on their individual Hilbert space $\cH_i$, it results in the joint measurement $M^{r} = M^{r_1} \otimes \cdots \otimes M^{r_n}$ on the global Hilbert space $\cH$. Associated to this joint measurement is the conditional probability distribution of players outputting $s$ from the outcomes of the measurements, \emph{i.e.}
\[
  Q(s|r) = \Pr\{s|r,\rho\}
    = \Tr \rho (M^{r_1}_{s_1} \otimes \cdots \otimes M^{r_n}_{s_n}).
\]

As can be checked easily, any behaviour obtained from measurement on a quantum system in this way is belief-invariant. However, the opposite need not be true. Also, a quantum behaviour need not be local, even though every local behaviour is quantum. We denote the set of quantum correlations by $\textbf{Q}(\cS \vert \cR)$.

\subsection{Equilibria: from Nash to correlated}
We are interested in families of game solutions that are optimal and stably so for the players, in the sense that no player has an incentive to change their adopted strategy assuming that the others adhere to theirs. We call these game solutions equilibria of the game, and for the nomenclature and precise definitions of the different types of equilibrium we consider here we refer to \cite{BigGame}. 
For instance, a Nash equilibrium is given by independent random functions $G_i\in\cA_i^{\cT_i}$ such that the following holds for every player $i$ and every alternate function $g_i'\in\cA_i^{\cT_i}$: 
\[
  \EE u_i(T,G(T)) \geq \EE u_i(T,G_{-i}(T_{-i})g_i'(T_i)).
\]

%\[\begin{split}
%   \langle u_{i,t_i}(\mathbf{g}) \rangle = \mathbb{E}_{\mathbf{t}_{-i}|t_i, \mathbf{g}} u_i\big(\mathbf{t}, g_1(t_1), \dots, g_n(t_n)\big) &\ge \mathbb{E}_{\textbf{t}_{-i}\vert t_i, (\textbf{g}_{-i},g'_i)}u_i(\textbf{t},\textbf{g}_{-i}\left(\textbf{t}_{-i}),g'_i(t_i)\right) = \langle u_{i,t_i}(\textbf{g}_{-i},g'_i)\rangle, \\
%   &\forall \, i\in N, t_i \in T_i, g'_i \in A_i^{T_i}. 
%\end{split}\]

We briefly review now the types of equilibrium associated to advice embodied in behaviours or quantum states considered in the sequel.

\subsubsection{Communication equilibria}
It may happen that in a game of incomplete information, the players have access to a correlation resource that may give them additional information. Operationally, this can be manifested as there being a trusted referee, who privately communicates with all players and shares with them their part of the correlation. The referee takes each player's input $r_i$, which may be a function $r_i=f_i(t_i)$ of their type (so that the referee gathers $r=r_1\ldots r_n$), samples from the distribution $Q_0(s|r)$, and provides each player $i$ with $s_i$ privately. The players may then use their type and the output $s_i$ to compute their individual action $a_i = g_i(t_i,s_i)$. 
In general both $F_i\in\cR_i^{\cT_i}$ and $G_i\in\cA_i^{\cT_i\times\cS_i}$ can be jointly distributed random functions (independent from the other players'), which together with $Q_0$ make $T$, $R$, $S$ and $A$ into jointly distributed random variables that hence allow us to define the expected payoffs $\EE u_i(T,A)$. We call the collection $(F,G,Q_0)$ a \emph{communication equilibrium} if for every player $i$ and every alternate function pair $f_i$ and $g_i$ (giving rise to a different random variable of actions $A'$), 
\[
  \EE u_i(T,A) \geq \EE u_i(T,A'). 
\]
In \cite{BigGame} it is shown that the resulting behaviour $Q(a|t)$, together with the trivial identity functions as pre- and post-processing is automatically a communication equilibrium, which we call the ``canonical form''. The set of all canonical-form communication equilibria of $G$ is denoted $\text{Comm}(G) \subset \textbf{ALL}(\cA|\cT)$. 

If $Q_0$ above is belief-invariant, we speak of a \emph{belief-invariant equilibrium} and it is easy to see that its canonical form is also belief-invariant. We denote the set of all belief-invariant equilibria in canonical form $\text{BI}(G) \subset \textbf{BINV}(\cA|\cT)$.

\subsubsection{Correlated equilibria}
As a special subclass of communication equilibria, \emph{(classically) correlated equilibria} are obtained by restricting the correlation to be a shared random variable that is independent of the players' inputs, \emph{i.e.} $Q_0(s|r) = Q_0(s)$.
By definition, all correlated equilibria are belief-invariant, and they also have a canonical form where $s_i\in\cS_i := \cA_i^{\cT_i}$. In other words, $Q(s)$ describes a joint distribution over the local random functions $G_i:\cT_i \rightarrow \cA_i$. 

The resulting behaviour $Q(a|t)$ of a correlated equilibrium is necessarily a local correlation, 
and we denote the set of these behaviours $\text{Corr}(G) \subset \textbf{LO}(\cA|\cT)$.

\subsubsection{Nash equilibria}
\emph{Nash equilibria} are obtained by further restricting $Q_0(s)$ to being a product distribution, $Q_0(s) = Q_1(s_1) \cdots Q_n(s_n)$. The resulting behaviour inherits this product form: \begin{equation}
  \label{eq:product-behaviour}
  Q(a|t) = Q_1(a_1|t_1) \cdots Q_n(a_n|t_n), 
\end{equation}
and we denote the set of behaviours of Nash equilibria $\text{Nash}(G) \subset \text{Corr}(G)$. As a matter of fact, $\text{Nash}(G)$ equals the intersection of $\text{Corr}(G)$ with the set of product behaviours as in \eqref{eq:product-behaviour}.

\subsubsection{Quantum correlated equilibria}
Directly going to the canonical form (cf.~\cite{BigGame}), a \emph{quantum correlated equilibrium} is given by an $n$-partite Hilbert space $\cH = \cH_1\otimes \cdots \otimes \cH_n$, a state $\rho$ of $\cH$ and POVMs $M^{t_i} = (M^{t_i}_{a_i} : a_i\in\cA_i)$ acting on $\cH_i$, such that for every player $i$ and every collection of alternate POVMs $M^{\prime t_i}$ the following holds: 
\[
  \EE u_i(T,A) \geq \EE u_i(T,A_{-i}A_i'),
\]
where 
\begin{align*}
  \Pr\{T=t,A=a\} 
    &= \Tr \rho(M^{t_1}_{a_1} \otimes \dots \otimes M^{t_n}_{a_n}), \\
  \Pr\{T=t,A_{-i}=a_{-i},A_i'=a_i\}
    &= \Tr \rho(M^{t_1}_{a_1} \otimes \cdots \otimes M^{t_{i-1}}_{a_{i-1}} \otimes M^{\prime t_i}_{a_i} \otimes \cdots \otimes M^{t_n}_{a_n}).
\end{align*}

%Suppose the players identify the inputs to the quantum correlation device $r_i$ with their given types $t_i$, and the outputs of the correlation $s_i$ to their final action $a_i$, we can simplify the expression for a player's expected utility for a game $G$ to the following:
%\[
%\langle u_{i,t_i} (\textbf{M}^\textbf{t},\rho)\rangle = \sum_{\textbf{t}_{-i},\textbf{a}} P(\textbf{t}_{-i}\vert t_i)\Tr \rho (M^{t_1}_{a_1} \otimes \dots \otimes M^{t_n}_{a_n})u_i(\textbf{t},\textbf{a}).
%\]
%
%For a solution $(\textbf{M}^\textbf{t},\rho)$ to be a quantum correlated equilibrium of game G, the following inequalities must, again, be satisfied for all players with all possible types and for all (alternative) quantum measurements in the Hilbert space $\cH$.
%\[\begin{split}
%    \langle u_{i,t_i} (\textbf{M}^\textbf{t},\rho)\rangle &= \sum_{\textbf{t}_{-i},\textbf{a}} P(\textbf{t}_{-i}\vert t_i)\Tr \rho (M^{t_1}_{a_1} \otimes \dots \otimes M^{t_n}_{a_n})u_i(\textbf{t},\textbf{a})\\
%    &\ge \sum_{\textbf{t}_{-i},\textbf{a}} P(\textbf{t}_{-i}\vert t_i)\Tr \rho (M^{t_1}_{a_1} \otimes \dots \otimes M^{t_{i-1}}_{a_{i-1}} \otimes M'^{t_i}_{a_i} \otimes \dots \otimes M^{t_n}_{a_n})u_i(\textbf{t},\textbf{a}), \quad \forall \, M'^{t_i}
%\end{split}\]

The set of behaviours $Q(a|t) = \Tr \rho(M^{t_1}_{a_1} \otimes \dots \otimes M^{t_n}_{a_n})$ of quantum correlated equilibria of the game $G$ is denoted $\text{Qu}(G) \subset \textbf{Q}(\cA|\cT)$. 
We have the evident inclusions
\[
  \text{Nash}(G) \subset \text{Corr}(G) \subset \text{Qu}(G) \subset \text{BI}(G) \subset \text{Comm}(G).  
\]
As remarked already, Nash's theorem applies to $\text{Nash}(G)$, showing that it -- and hence all the above sets -- is nonempty. It is easy to see that $\text{Corr}(G)$, $\text{Qu}(G)$, $\text{BI}(G)$ and $\text{Comm}(G)$ are convex sets, and it is well-known that $\text{Corr}(G)$, $\text{BI}(G)$ and $\text{Comm}(G)$ are actually compact polytopes. 

We have given here the conceptually simplest definitions of the different notions of equilibrium; in \cite{BigGame}, more compact (and in particular: more efficiently testable) formulations of the equilibrium condition in each class are discussed.

\section{A class of games with quantum correlated equilibrium\protect\\ 
superior to any classically correlated equilibrium}
\label{sec:main}
We start with a general recipe for converting any non-local game (or more generally 
a Bell inequality) with independent queries to the players and quantum 
advantage into a Bayesian competitive game for the same players but 
with altered type and action sets. For simplicity we give it first for 
two players, but at the end of the section formalise it for arbitrary 
number of players. 

A two-player nonlocal (cooperative) game $G$ for us is given by sets $\cX$ 
and $\cY$ of queries (the traditional name for the types in this setting), 
a product distribution $p\times q$ on $\cX\times\cY$ making $X$ and $Y$ two
jointly distributed independent random variables, output (action) sets 
$\cA$ and $\cB$, and a common payoff function $V(a,b,x,y) \in \RR_{\geq 0}$ 
taking values in non-negative reals. We shall assume w.l.o.g.~that $p(x)q(y)>0$
for all pairs $(x,y)$.
A classical strategy for the game is a pair of random functions 
$A:\cX\rightarrow\cA$ and $B:\cY\rightarrow\cB$ with a joint distribution
(which allows them to be correlated). 
By slight abuse of notation, but without danger of confusion, we declare 
$A=A(X)$ and $B=B(Y)$ as the random variables of the outputs, so that the 
payoff is $\EE V(A,B,X,Y)$, and the maximum classical payoff, the Bell local 
hidden variable limit, is 
\begin{equation}
  \label{eq:beta-def}
  \beta(G) := \max_{A,B \text{ random} \atop \text{functions}} \EE V(A,B,X,Y).
\end{equation}
We are interested in games where there is a quantum advantage: for this 
purpose, a quantum strategy is given by a state (density matrix) $\rho$ 
on a bipartite system $\cU\otimes\cV$ and local measurements (POVMs) 
$A^x = (A^x_a:a\in\cA)$ on $\cU$ and $B^y = (B^y_b:b\in\cB)$ on $\cV$. 
This defines a behaviour 
\begin{equation}
  \label{eq:qu-behaviour}
  Q(a,b|x,y) = \Tr \rho(A^x_a \ox B^y_b), 
\end{equation}
i.e. a conditional probability distribution of outputs $A$ and $B$ 
conditional on $X=x$ and $Y=y$. This makes $X$, $Y$, $A$ and $B$ into 
jointly distributed random variables, 
\[
  \Pr\{A=a,B=b,X=x,Y=y\} = p(x)q(y) Q(a,b|x,y), 
\]
so that we get a quantum payoff $\tau := \EE V(A,B,X,Y)$, where the 
expectation is calculated with respect to the above distribution. Let us 
assume that our game is of the type that shows a quantum advantage, 
more precisely that the contemplated strategy attains $\tau > \beta(G)$. 
Furthermore, that Bob's POVMs $B^y=(B^y_b:b\in\cB)$ are optimal (i.e. 
achieves the maximum payoff) given that $\rho$ and the $A^x=(A^x_a:a\in\cA)$ 
are fixed. (If $\cV$ is a finite-dimensional Hilbert space, this is 
guaranteed by the continuity of the payoff function and compactness.) 

Now, we are almost ready to define our associated Bayesian game 
$\widetilde{G}^\Lambda$, for a tunable parameter $\Lambda>0$, which has 
type spaces $\cS=\{\ast\}$ for Alice (i.e. she has no or trivial type) and 
$\cT=\cY$ for Bob (with the prior type distribution $q$), and action spaces 
$\cA' = \cX\times\cA$ for Alice and $\cB' = \cB\times\cX$ for Bob. 
The idea is that Alice produces both the input $x$ and the output $a$ of 
the nonlocal game $G$, whereas Bob gets the input $y$ and has to produce 
the output $b$ as well as a guess $\hat{x}$ of Alice's $x$: their payoff 
functions both have a common term $V(a,b,x,y)$, but in addition they play 
a zero-sum game where Bob is rewarded for $\hat{x} = x$ while he is penalised 
for $\hat{x} \neq x$, which heavily incentivises Alice to play $X\sim p$ and 
Bob to venture the guess $\widehat{X} \sim p$. The zero-sum game is defined by Bob's payoff 
matrix $W$ (Alice's is simply the negative $-W$, or more precisely a multiple thereof),
\begin{equation}
  W_{x,\hat{x}} = \begin{cases}
                    (|\cX|-1)p(x)^{-2}        & \text{ if } \hat{x} = x, \\
                    -p(x)^{-1}p(\hat{x})^{-1} & \text{ if } \hat{x} \neq x,
                  \end{cases}
\end{equation}
which in matrix form is written compactly as $W = P^{-1} (|\cX|\1-J) P^{-1}$, with the diagonal matrix $P=\operatorname{diag}\left(p(x):x\in\cX\right)$ and $J$ the all-$1$ matrix. 
From this we can see that $W$ is positive semidefinite, with the unique left and 
right annihilating eigenvector $p=(p(x):x \in \cX)$: $Wp = 0$, $p^\top W = 0$.

\begin{lemma}
  \label{lemma:W-game}
  The zero-sum game with payoff matrix $-W$ for the row player (Alice) and $+W$ for the column player (Bob) has a unique equilibrium, in which both players follow the mixed strategy $p$.
\end{lemma}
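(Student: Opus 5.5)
Since the game is zero-sum with finitely many pure strategies, the minimax theorem applies, and the equilibria are exactly the pairs of optimal (maximin/minimax) strategies. The plan is to show that the value of the game is $0$, that $p$ guarantees this value for both players, and that no other mixed strategy does.

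Write Alice's mixed strategy as a probability vector $\alpha$ on $\cX$ and Bob's as $\gamma$. Bob's expected payoff is then $\alpha^\top W \gamma$.

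First, $p$ is a safe strategy for both players. Because $p^\top W = 0$, if Alice plays $p$ then Bob's payoff is $0$ against every $\gamma$. Because $Wp = 0$, if Bob plays $p$ then his payoff is $0$ against every $\alpha$. So each player can guarantee the value $0$, and $(p,p)$ is an equilibrium with value $0$.

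Next, uniqueness, treated separately for each player.

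\emph{Alice.} Suppose $\alpha$ is optimal for her. Then Bob's best response yields at most $0$, i.e.\ $\max_\gamma \alpha^\top W\gamma \le 0$. In particular, choosing $\gamma=\alpha$ gives $\alpha^\top W\alpha \le 0$. Since $W$ is positive semidefinite, this forces $\alpha^\top W \alpha = 0$, hence $W\alpha=0$. The kernel of $W$ is spanned by $p$: indeed $W=P^{-1}(|\cX|\1-J)P^{-1}$, and the kernel of $|\cX|\1-J$ is spanned by the all-ones vector $\mathbf 1$, so $\ker W = P\,\mathrm{span}(\mathbf 1) = \mathrm{span}(p)$. Normalisation of $\alpha$ then gives $\alpha=p$.

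\emph{Bob.} The same argument applies with the roles swapped. If $\gamma$ is optimal for Bob, then $\min_\alpha \alpha^\top W\gamma \ge 0$. Because $W$ is symmetric and $Wp = 0$, we have $\mathbf 1^\top P W\gamma = (Wp)^\top\gamma = 0$. This means $\sum_x p(x)(W\gamma)_x = 0$, with weights $p(x)>0$. Every entry $(W\gamma)_x = e_x^\top W\gamma$ must be $\ge 0$ (take $\alpha=e_x$), so all entries vanish and $W\gamma = 0$. Therefore $\gamma = p$.

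One could try the shortcut $\gamma^\top W\gamma \ge 0$ for Bob's side, but it gives nothing, since positive semidefiniteness already guarantees that. This is why Bob's case needs the different argument above: an averaging identity with strictly positive weights $p(x)$. The assumption $p(x)>0$ is used there and also in forming $P^{-1}$.

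Finally, every equilibrium of a finite zero-sum game consists of optimal strategies for both players. The equilibrium $(p,p)$ is therefore the unique one.
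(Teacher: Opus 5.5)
Your proof is correct and follows the paper's argument. You show $p$ guarantees value $0$ for both players via $p^\top W=0=Wp$. You pin down Bob's optimal strategy by the positive-weight averaging identity $p^\top W\gamma=0$, which forces $W\gamma=0$. You pin down Alice's by having Bob mirror her strategy, using that $W$ is positive semidefinite with $\ker W=\mathrm{span}(p)$. Beyond the paper, you spell out the kernel computation and the fact that equilibria of a finite zero-sum game are exactly pairs of optimal strategies.
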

\begin{proof}
Clearly, if Alice plays the mixed strategy $p$, since $p^\top W = 0$, regardless of Bob's strategy $p'$, both Alice and Bob get payoff $p^\top W p' =0$; in other words, Alice's optimal payoff is $\geq 0$. 
Likewise, if Bob plays the mixed strategy $p$, since $Wp=0$, regardless of Alice's strategy $p'$, both Alice and Bob get payoff ${p'}^\top W p =0$; in other words, Bob's optimal payoff is $\geq 0$, too, hence the value of the game is indeed $0$. 

To argue the uniqueness of either player's optimal strategy, assume that Bob plays $p'\neq p$, thus $\omega := Wp' \neq 0$. But as $\sum_x p_x \omega_x = p^\top \omega = {p}^\top Wp' = 0$, there must exist an $x'$ with $\omega_{x'} = \delta_{x'}^\top\omega < 0$, hence playing $\delta_{x'}$ Alice can make a positive gain. The argument for Alice playing $p'\neq p$ is similar; or else, Bob could simply mirror Alice and play $p'$, too, giving him a payoff ${p'}^\top Wp' > 0$: due to the positive semidefiniteness of $W$, the value is $\geq 0$, and it cannot be $=0$ since that is only possible for the eigenvector $p$, which we excluded explicitly. 

For later use we derive a lower bound on Bob's expected winnings if Alice plays $p'\neq p$, in terms of the total variation distance of $p'$ from $p$:
\begin{equation}
  \label{eq:Bobs-gain}
  \text{If}\quad \frac12\|p'-p\|_1 > \delta, 
  \quad\text{then there exists}\ 
  x'\in\cX\ \text{s.t.}\quad 
  {p'}^\top W\delta_{x'} > \frac{\delta}{\|p\|^2} \frac{|\cX|}{|\cX|-1},
\end{equation}
where $\|p\| := \max_x p(x)$ and $\delta_{x'}$ is the pure strategy of deterministically playing $x'$. To see this, recall that 
\[\begin{split}
  \delta < \frac12\|p'-p\|_1 
  &= \sum_x \frac12|p'(x)-p(x)| \\
  &= \sum_{x\in\cX_+} (p'(x)-p(x))
  = \sum_{x\in\cX_-} (p(x)-p'(x)),
\end{split}\]
with $\cX_+ = \{x : p'(x)>p(x)\}$ and $\cX_- = \{x : p'(x)<p(x)\}$. Note that by our assumption, both of these sets are nonempty and $|\cX_+|+|\cX_-| \leq |\cX|$. More precisely, there exists $x\in\cX_+$ with $p'(x)-p(x)>\frac{\delta}{|\cX_+|}$, and for all $x\in\cX_-$ we have $p'(x)-p(x) =: -\delta_x < 0$ with $\sum_{x\in\cX_-} \delta_x > \delta$. Thus, if we define $x' := \arg\max \frac{p'(x)}{p(x)}$, 
%and $x^\circ := \arg\min \frac{p'(x)}{p(x)}$, 
then 
\[
  \frac{p'(x')}{p(x')} 
    >    1+\frac{\delta}{p(x')|\cX_+|}
    \geq 1+\frac{\delta}{\|p\| |\cX_+|}.
  \quad
  \forall x\in\cX_-\ \ 
  \frac{p'(x)}{p(x)} = 1-\frac{\delta_x}{p(x)}
    \leq 1-\frac{\delta_x}{\|p\|}.
\]
Thus, with Alice playing $p'$ and Bob $\delta_{x'}$, his expected payoff is 
\[\begin{split}
  {p'}^\top W \delta_{x'} 
    &= \frac{1}{p(x')} \sum_{x\neq x'} \left( \frac{p'(x')}{p(x')} - \frac{p'(x)}{p(x)} \right) \\
    &\geq \frac{1}{p(x')} \sum_{x\in\cX_-} \left( \frac{p'(x')}{p(x')} - \frac{p'(x)}{p(x)} \right) \\
    &> \frac{1}{\|p\|} \left( \frac{\delta| \cX_-|}{\|p\| |\cX_+|}+\frac{\delta}{\|p\|} \right)
    \geq \frac{\delta}{\|p\|^2} \frac{|\cX|}{|\cX|-1},
\end{split}\]
concluding the argument.
%\textcolor{blue}{This is an improved bound and still needs to be used as such in the main theorem below...}
\end{proof}

\medskip
With these preparations, we can write down the payoff functions of Alice and Bob in $\widetilde{G}^\Lambda$:
% in the new game $\widetilde{G}^\Lambda$:
\begin{align}
  \label{eq:payoff-A}
  u_A(x,a, b,\hat{x}, y) &:= V(a,b,x,y) - 2\Lambda W_{x,\hat{x}}, \\
  \label{eq:payoff-B}
  u_B(x,a, b,\hat{x}, y) &:= V(a,b,x,y) + \Lambda W_{x,\hat{x}}.
\end{align}
Next we can give the separable quantum advice state we propose to the players 
to use. Note that Alice has only trivial type, hence her advice is classical 
and the following is actually a classical-quantum (cq-)state: 
\begin{equation}
  \label{eq:omega}
  \omega^{A'B'} = \sum_{x,a,\hat{x}} p(x) \proj{xa}^{A'} 
                                     \ox p(\hat{x}) \bigl(\Tr_A \rho(A^x_a\ox\1) \ox \proj{\hat{x}}\bigr)^{B'},
\end{equation}
together with the original POVMs $B^y = (B^y_b:b\in\cB)$ for Bob.
The state $\omega$ can be interpreted as the players using the -- necessarily 
entangled(!) -- state $\rho$, but Alice generates her own sample $x$ of the 
input according to the distribution $p$, measures $A^x$ and records the output 
$a$; this leaves Bob with a post-measurement state, and in addition he generates 
$\hat{x}$ independently according to $p$. Of course, ultimately $\omega$ is 
fully separable, so it can be prepared without entanglement or exchanging qubits between the distant parties, rather using 
only classical correlation and local quantum state preparations. 

Playing this advice results in the behaviour
\begin{equation}
  \widetilde{Q}((x,a),(b,\hat{x})|\ast,y) = p(x)p(\hat{x})Q(a,b|x,y), 
\end{equation}
with the $Q$ from Eq. \eqref{eq:qu-behaviour} above. The first observation is that if Alice
and Bob play this quantum advice, $\EE u_A(X,A,B,\widehat{X},Y) = \tau 
= \EE u_B(X,A,B,\widehat{X},Y)$, so the social welfare SW(here defined to be the average of player payoffs) 
has expectation $\tau$. 
Secondly, the condition for quantum correlated equilibrium is satisfied 
for Bob: indeed, $\EE W(X,\widehat{X}) = 0$ for any choice of 
distribution of $\widehat{X}$, since $p^\top W = 0$. On the other hand, 
$\EE V(A,B,X,Y)$ is maximised by the very POVMs $B^y$ as per our assumption. 
Thirdly, it might be intuitive that also Alice's equilibrium condition is 
satisfied: on the one hand, again $\EE W(X,\widehat{X}) = 0$ for any choice of 
distribution of $X$, since $Wp = 0$; however, depending on the game $G$ it 
might be possible that after seeing the advice $X=x$ and $A=a$, Alice could 
potentially increase $\EE V(A,B,X,Y)$ by offering a different pair $(x',a')$. 
We will explore this for the concrete CHSH game below (Section \ref{sec:numerics}), 
but to proceed here we shall specialise to the subclass 
of quantum pseudo-telepathy games: those are games where 
$V(a,b,x,y) \in \{0,1\}$ is a Boolean predicate characterising ``win'', 
for which $\beta(G)<1$ is the maximum classical winning probability, whereas there 
is a quantum strategy achieving $\tau = \tau(G) = 1$, which we will assume henceforth. 
With this it is clear that we are in an equilibrium: Alice cannot 
unilaterally change the fact that $\EE W(X,\widehat{X}) = 0$, while 
$V(a,b,x,y) \in [0;1]$ and hence the same for any expectation value; 
at the same time, the advice already achieves $\EE V(A,B,X,Y) = \tau = 1$.

\begin{theorem}
  \label{thm:G-tilde:quantum-equilibrium}
  If $G$ is a quantum pseudo-telepathy game, then for any $\Lambda>0$ 
  the game $\widetilde{G}^\Lambda$ defined above has a quantum correlated 
  equilibrium in the state $\omega$ and the POVMs $(B^y_b:b\in\cB)$ above, 
  which achieves social welfare $1$. 
  
  The corresponding behaviour $\widetilde{Q}$ defined above is not the 
  behaviour of any classically correlated equilibrium. 
  Furthermore, for every $\varepsilon>0$ there exists a $\Lambda(\varepsilon)$ 
  such that for all $\Lambda \geq \Lambda(\varepsilon)$, the maximum social 
  welfare over all classically correlated equilibria is $\leq \beta(G)+\varepsilon$.
\end{theorem}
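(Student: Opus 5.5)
The plan has three parts, one for each claim of Theorem~\ref{thm:G-tilde:quantum-equilibrium}. The first claim is essentially the discussion preceding the theorem, and I would only formalise it. \emph{Bob} cannot gain on the $W$-term, because $p^\top W=0$ makes $\EE W(X,\widehat X)=0$ whatever he does with $\hat x$. He cannot gain on the $V$-term either, because the $B^y$ are optimal given $\rho$ and the $A^x$. Bob's post-measurement conditional states are exactly those of the original quantum strategy, and $\hat x$ is independent of everything else. \emph{Alice} cannot move $\EE W$ away from $0$, because $Wp=0$ and Bob's $\widehat X\sim p$ is independent of her action. The $V$-term is already at its maximum $1=\tau$. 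Hence both utilities equal $\tau=1$, and so does the social welfare.

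For the remaining claims I would work with canonical-form correlated equilibria. A mediator draws a recommendation $(x,a)$ for Alice jointly with a function $g:\cY\to\cB\times\cX$ for Bob, written $g(y)=(b_g(y),\hat x_g(y))$. Let $p_g$ denote the conditional distribution of $X$ given Bob's recommendation $g$. The key deviation for Bob is to keep $b_g(y)$ but replace $\hat x_g(y)$ by a constant $x'$. This leaves $V$ unchanged, so the equilibrium condition gives, for every $g$ with positive probability,
\[
  \EE\bigl[W_{X,\hat x_g(Y)}\,\big|\,g\bigr] \geq \max_{x'} p_g^\top W\delta_{x'} \geq \sum_{x'}p(x')\,p_g^\top W\delta_{x'} = 0 .
\]
Combining this with \eqref{eq:Bobs-gain} of Lemma~\ref{lemma:W-game}, the right-hand side is at least $c\cdot\frac12\|p_g-p\|_1$, with $c=\frac{|\cX|}{\|p\|^2(|\cX|-1)}$.

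\emph{Second claim.} Suppose a correlated equilibrium had behaviour $\widetilde Q$. In $\widetilde Q$ the $W$-term has expectation exactly $0$, so every conditional expectation in the display above must vanish. Hence $p_g=p$ for all $g$, i.e.\ $X$ is independent of Bob's recommendation $G_B$. I then take $\lambda=G_B$ as a hidden variable. Alice answers $x$ by sampling $a$ from $P(A=a\,|\,X=x,G_B=\lambda)$, and Bob answers $y$ with $b_\lambda(y)$. By the independence of $X$ and $\lambda$ this reproduces $Q(a,b|x,y)$ as a local behaviour. But $Q$ wins $G$ with probability $1>\beta(G)$, which is a contradiction.

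\emph{Third claim.} Alice has a deviation that ignores her advice: she samples $x\sim p$ independently and plays an arbitrary $a$. This makes her $W$-term $0$, and her payoff is then at least $0$. At equilibrium her payoff is $\EE V-2\Lambda\EE W\geq 0$, which with $V\le 1$ gives $\Lambda\,\EE W\le \frac12$. Averaging Bob's inequality over $g$ then yields
\[
  \EE_g\, \tfrac12\|p_g-p\|_1 \le \frac{1}{2c\Lambda}.
\]
Next I couple: replace the joint law of $(X,A,G_B)$ by the law in which $X\sim p$ is drawn independently of $G_B$ and $A$ is drawn from the same conditional given $(X,G_B)$. This changes the joint law by at most $\EE_g\frac12\|p_g-p\|_1$ in total variation. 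As in the second claim, the new law is a local strategy for $G$, so its value is at most $\beta(G)$. Since $V\in[0,1]$, it follows that $\EE V\le\beta(G)+\frac{1}{2c\Lambda}$. Finally, the social welfare equals $\EE V-\frac{\Lambda}{2}\EE W\le \EE V$, because $\EE W\ge 0$ by Bob's inequality. Choosing $\Lambda(\varepsilon)=1/(2c\varepsilon)$ finishes the proof.

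The main obstacle is the coupling step. One has to check carefully that the conditional structure, with Alice's $a$ depending on $(x,g)$ and Bob's $b$ determined by $g$, is really a local behaviour whose input distribution is $p\times q$. One also has to check that the total variation bound transfers to $\EE V$. The independence of $Y$ from $(X,A,G_B)$, which holds because Alice has a trivial type, makes this work.
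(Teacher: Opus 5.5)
Your proposal is correct and follows essentially the same route as the paper. It uses the same three ingredients: Bob's point-mass deviation via Eq.~\eqref{eq:Bobs-gain}, Alice's deviation to $X\sim p$ forcing $\EE u_A\geq 0$, and the coupling to a local strategy with Bob's recommendation as hidden variable. The differences are only organisational: in the second claim you derive $p_g=p$ from Bob's equilibrium condition before invoking locality (the paper argues the contrapositive), and in the third claim you replace the paper's two-case split with the direct bound $\EE_g\frac12\|p_g-p\|_1\leq\frac{1}{2c\Lambda}$, which gives the same $\Lambda(\varepsilon)$.
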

\begin{proof}
The first part is proved already by the preceding discussion. 

For the second part, we consider a general supposed correlated equilibrium,
which consists of jointly distributed random variables $X$, $A$ and $F$, 
where the former two are hopefully clear, and $F:\cY\rightarrow\cB\times\cX$ 
is a random function. By way of contradiction let us assume that 
\[\begin{split}
  \widetilde{Q}((x,a),(b,\hat{x})|\ast,y) 
    &= p(x)p(\hat{x})\Tr\rho(A^x_a \ox B^y_b) \\
    &\stackrel{!}{=} 
       \Pr\{X=x, A=a, F(y)=(b,\hat{x})\}.  
\end{split}\]
Thus, for every value $y$, $X$ and $\widehat{X}$ are independent and indeed 
$\Pr\{X=x,\widehat{X}=\hat{x}|y\} = p(x)p(\hat{x})$. 
In particular, $X \sim p$ independently of $y$. We will show next that 
$X$ and $F$ cannot be independent, for assume the opposite by way of contradiction, then we could write the joint distribution of $X$, $A$ and $F$ as 
\[\begin{split}
  \Pr\{X=x,A=a,F=f\} &= p(x)\Pr\{F=f\}\Pr\{A=a|X=x,F=f\} \\
                     &=: p(x)\mu(f)\alpha_f(a|x),
\end{split}\]
and we would obtain $Q(a,b|x,y)$ as a local correlation with the hidden 
variable $F$: 
\begin{equation}
  Q(a,b|x,y) = \EE_F \bigl(\alpha_F(a|x)\delta_{b,F(y)}\bigr).
\end{equation}
But note that $Q$ attains the Tsirelson value $\tau=1$ of the game by construction, 
whereas a local correlation can only attain $\leq \beta(G) < 1$. 
This contradiction shows that indeed $X$ and $F$ have some dependency, in particular there is a function $f:\cY\rightarrow\cB\times\cX$ 
occurring with positive probability $\Pr\{F=f\} > 0$, such that conditional on $F=f$, the distribution of $X$ is different from $p$: 
$p_f(x) := \Pr\{X=x|F=f\} \neq p(x)$ for some $x$. But in this case, Bob can increase his expected payoff by sampling $\hat{x}$ from $p_f$ rather than following the advice (Lemma \ref{lemma:W-game}), and now we have a contradiction to the assumption that we had an equilibrium. 

For the third part, we make this reasoning quantitative. For this purpose, consider 
a correlated equilibrium for $\widetilde{G}^\Lambda$, which is given by jointly 
distributed random variables $X\in\cX$, $A\in\cA$ and $F:\cY\rightarrow\cB\times\cX$. 
Note that at an equilibrium, $\EE W_{X,\widehat{X}} \geq 0$, for if it were
negative, Bob could always improve to $\EE W_{X,\widehat{X}} = 0$ by playing 
$\widehat{X} \sim p$ independently of $F$. This means that in equilibrium, 
the social welfare is $\leq \EE V(A,B,X,Y)$, and actually smaller than $\EE V$
by the amount $\EE W_{X,\widehat{X}} \geq 0$ if the latter is positive. 
As before, denote for every function $f$ with $\mu(f) = \Pr\{F=f\} > 0$ the conditional 
distribution $p_f(x) = \Pr\{X=x|F=f\}$ of $X$ given that $F=f$. Now one of two cases must occur:
\begin{enumerate}
  \item either $\sum_f \Pr\{F=f\} \frac12\|p-p_f\|_1 > \varepsilon$; 
  \item or $\sum_f \Pr\{F=f\} \frac12\|p-p_f\|_1 \leq \varepsilon$. 
\end{enumerate}

%\noindent
\textit{Case 1:} For each $F=f$, and denoting $\varepsilon_f = \frac12\|p_f-p\|_1$, by sampling $\widehat{X} \sim \delta_{x_f'}$ for an appropriate point mass at $x_f' \in \cX$, Bob can make $\EE( W_{X,\widehat{X}}|F=f) > \Delta(\varepsilon_f) := \frac{\varepsilon_f}{\|p\|^2}\frac{|\cX|}{|\cX|-1}$, according to Lemma \ref{lemma:W-game}, in particular Eq.~\eqref{eq:Bobs-gain} in its proof. This implies $\EE W_{X,\widehat{X}} \geq \sum_f \Pr\{F=f\} \Delta(\varepsilon_f) > \Delta(\varepsilon)$.
Hence, if $\Lambda \geq \Lambda(\varepsilon) := \frac{1}{2\Delta(\varepsilon)} = \frac{\|p\|^2}{2\varepsilon}\left(1-\frac{1}{|\cX|}\right)$, we find $\EE \Lambda W_{X,\widehat{X}} > 1$ and so $\EE u_A < 0$ for Alice's payoff. However, this cannot actually happen, because Alice, simply playing $X \sim p$ instead, could make all contributions to her payoff non-negative by annihilating $\EE W_{X,\widehat{X}} = 0$, and this hence has to be the case at equilibrium. We conclude that there cannot be any equilibrium in case 1.

%\noindent
\textit{Case 2:} The assumption means that $X$ and $F$ are almost independent, indeed we can rephrase it as $\frac12\|\PP(X,F) - p\ox\PP(F)\|_1 \leq \varepsilon$ for the joint and marginal 
distributions of $X$ and $F$. Thus, leaving $F$ alone and defining new random 
variables $X'$ and $A'$ with the joint distribution
\[
  \Pr\{X'=x,A'=a,F=f\} = p(x)\mu(f)\Pr\{A=a|X=x,F=f\},
\]
we conclude 
\(
  \frac12\|\PP(X,A,F) - \PP(X',A',F)\|_1 \leq \varepsilon, 
\)
and hence (since $0 \leq V \leq 1$)
\[
  \left| \EE V(A,B,X,Y) - \EE V(A',B,X',Y) \right| \leq \varepsilon.
\]
At the same time, since $X' \sim p$ and $A',B$ conditional on $X',Y$ is a local behaviour 
(as before, we use $F$ as the hidden variable), it must be the case that 
$\EE V(A',B,X',Y) \leq \beta(G)$. Thus, 
\[
  \EE(u_A+u_B) = 2\text{SW} = 2\EE V(A,B,X,Y) - \Lambda\EE W_{X,\widehat{X}} \leq 2(\beta(G)+\varepsilon), 
\]
recalling that due to equilibrium, $\EE W_{X,\widehat{X}} \geq 0$. 
%This concludes the proof.
\end{proof}

\medskip
\begin{remark}
One way to understand our construction and proof is to regard the zero-sum $W$ game as acting as a \emph{mechanism} \cite{Mechanism} in the given non-local game $G$: it provides a strong incentive for Alice to generate $X$ according to $p$ and independently of Bob (this is true both for the classical and the quantum correlation advice). In plain words, it functions as a (big) fine for Alice for revealing $X$, accompanied by a (relatively small) reward for Bob for showing her up. This ensures that her behaviour is essentially as if she were playing the nonlocal game $G$.

The difference between quantum and classical advice is that the former allows Bob to be informed about his ideal action (towards winning in the nonlocal game $G$) without revealing anything about $X$, whereas with the latter the same is impossible to realise unless Alice and Bob sacrifice a large part of their joint payoff. This could be called the \emph{price of privacy}, or indeed the price of knowing too much. For an instance of this principle in a different context cf.~\cite{Twain}. 
%\textcolor{blue}{Is here the place to discuss rock-paper-scissors with penalty for draw? (Because of the utility of correlation that hides information about one player's actions from the other players...)}
\end{remark}

%\medskip
%\begin{remark}
%Since we start from a quantum pseudo-telepathy game, the optimal quantum strategy is realised with a pure state $\rho = \proj{\varphi}$ (though not every optimal strategy requires that). By Naimark's dilation theorem, we can furthermore assume that Alice's measurements $A^x$ are projective, i.e.~all $A^x_a$ are projection operators. Then, we can exhibit a whole family of pure entangled states, all of which are quantum correlated equilibria and give rise to the same behaviour $\widetilde{Q}$: 
%\[ ... \]
%This is perhaps not so remarkable, except that the average over independent phases ... in the above state precisely recovers $\omega$.
%\end{remark}

%\medskip
%\textcolor{blue}{What happens when $\Lambda < \Lambda(\varepsilon)$ in the above proof? There might be a regime where we can make $\EE u_A$ positive but arbitrarily small and still have an equilibrium; this would be compensated by Bob's $\EE u_B > 1$, although the social welfare would remain bounded away from $2$...}

\medskip
\begin{remark}
Let us go back to the behaviour $\widetilde{Q}$ appearing in the proof, 
obtained from measuring the separable advice state $\omega$. As such, 
it must be a local correlation in the sense of Bell \cite{Brunner-et-al:nonlocality}, 
and at the same time it is a belief-invariant communication equilibrium 
\cite{BigGame} (cf. \cite{Forges:5-legitimate-defs}). However, the above argument shows that the same $\widetilde{Q}$ is not the behaviour of any 
classically correlated equilibrium. In other words, for the games $\widetilde{G}^\Lambda$ 
the set of behaviours of classically correlated equilibria is a strict 
subset of the intersection of belief-invariant communication equilibria 
with local behaviours: $\text{Corr}(\widetilde{G}^\Lambda) \subsetneq \text{BI}(\widetilde{G}^\Lambda) \cap \textbf{LO}$. 
%\textcolor{blue}{Discuss relation with Forges' legitimate notions of correlated equilibrium \cite{Forges:5-legitimate-defs,Forges:corr-eq-revisited}; 

This is analogous to \cite{AMP:corr-vs-comm}, where it was shown that there are belief-invariant equilibria whose behaviour is quantum, yet it is not the behaviour of a quantum correlated equilibrium: $\text{Qu}(G_{\text{AMP}}) \subsetneq \text{BI}(G_{\text{AMP}}) \cap \textbf{Q}$.

The reason why these inequalities are even possible comes from the subtle difference in the equilibrium conditions for belief-invariant and classically correlated advice: the former is expressed in terms of the behaviour $Q$ itself, the latter instead, assuming $Q$ is local, is a property of the joint distribution of local functions used to express $Q$. Not only is this distribution typically not unique to $Q$, but what makes the above constructions work is that these random functions reveal information to the players about each other that destroys the beneficial communication equilibrium.
\end{remark}

\medskip
Reflecting on the construction, we can see that the latter separation 
(between local communication equilibria and those coming from correlated equilibria) 
can be obtained directly from a Bell inequality with no-signalling advantage. 
Denote the no-signalling value of the game $G$ by $\nu(G)$ and assume that it is 
$>\beta(G)$ -- for example, the CHSH game has $\beta=\frac34$ and $\nu=1$ --, 
then we can construct the game $\widetilde{G}^\Lambda$ and the communication 
advice $\widetilde{Q}$ as above. The latter is clearly local (it has inputs only 
for one of the players), and if $V(a,b,x,y)\in\{0,1\}$ and $\nu(G)=1$, it is a 
communication equilibrium by the same reasoning as before the statement of 
Theorem \ref{thm:G-tilde:quantum-equilibrium}. The rest of the argument is identical, and we obtain the following.
\begin{theorem}
  \label{thm:G-tilde:local-comm-equilibrium}
  If $G$ is a no-signalling pseudo-telepathy game, then for any $\Lambda>0$ 
  the game $\widetilde{G}^\Lambda$ defined above has a communication equilibrium 
  $\widetilde{Q}$, which is a local behaviour and achieves social welfare $1$. 
  
  At the same time, this behaviour $\widetilde{Q}$ is not the 
  behaviour of any classically correlated equilibrium. 
  Furthermore, for every $\varepsilon>0$ there exists a $\Lambda(\varepsilon)$ 
  such that for all $\Lambda \geq \Lambda(\varepsilon)$, the maximum social 
  welfare over all classically correlated equilibria is $\leq \beta(G)+\varepsilon$.
  \hfill$\blacksquare$
\end{theorem}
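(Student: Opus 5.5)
The plan is to rerun the proof of Theorem~\ref{thm:G-tilde:quantum-equilibrium} almost verbatim, replacing the quantum strategy $(\rho,A^x,B^y)$ by a no-signalling behaviour $Q\in\textbf{BINV}(\cA\times\cB|\cX\times\cY)$ with $\EE V(A,B,X,Y)=\nu(G)=1$. Such a $Q$ exists by compactness of the no-signalling polytope.

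First we define the communication advice
\[
  \widetilde{Q}((x,a),(b,\hat{x})|\ast,y) = p(x)p(\hat{x})Q(a,b|x,y).
\]
\begin{itemize}
\item \emph{It is a valid behaviour.} It is nonnegative and normalised because $Q$ is.
\item \emph{It is local.} Alice has a single trivial type, so we can take $(x,a)$ itself as the shared hidden variable. Sample $(x,a)$ with probability $p(x)Q_A(a|x)$; here $Q_A$ is the well-defined marginal of $Q$, which exists by no-signalling. Bob then samples $b$ from $Q(b|a,x,y)=Q(a,b|x,y)/Q_A(a|x)$ and $\hat{x}\sim p$ independently. This reproduces $\widetilde{Q}$, so $\widetilde{Q}\in\textbf{LO}$.
\item \emph{Social welfare.} Since $p^\top W=0$ and $Wp=0$, we get $\EE W_{X,\widehat{X}}=0$. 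Hence both payoffs equal $\EE V=1$ and the social welfare is $1$.
\end{itemize}

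Next we check the communication-equilibrium condition in canonical form: no player gains from misreporting a type and then post-processing the received output.
\begin{itemize}
\item \emph{Alice.} She has only one type, so misreporting does nothing. Any post-processing $g_A$ of $(x,a)$ leaves her with payoff $\EE V-2\Lambda\EE W$. The second term is $0$ whatever distribution she induces on $X$, because Bob's $\widehat{X}\sim p$ is independent of everything and $Wp=0$. The first term is at most $1$, which is already attained. So she cannot gain.
\item \emph{Bob.} Whatever he reports or does, $\EE W=0$ because $X\sim p$ independently and $p^\top W=0$. Also $\EE V\le 1$, which is already attained. So he cannot gain either.
\end{itemize}
Thus $\widetilde{Q}\in\text{Comm}(\widetilde{G}^\Lambda)$. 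It is in fact in $\text{BI}(\widetilde{G}^\Lambda)$, since $\widetilde{Q}$ is local and hence belief-invariant.

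The second and third parts concern classically correlated equilibria only. There the proof of Theorem~\ref{thm:G-tilde:quantum-equilibrium} used just two facts about $Q$: that $X\sim p$, and that $Q$ has value $1>\beta(G)$. It never used quantumness.
\begin{itemize}
\item \emph{Second part.} Suppose a correlated equilibrium $(X,A,F)$ reproduces $\widetilde{Q}$. Then $X\sim p$. If $X$ were independent of $F$, then $Q$ would be local with hidden variable $F$, and its value would be at most $\beta(G)<1$, a contradiction. So some $f$ with $\Pr\{F=f\}>0$ has $p_f\neq p$. By Lemma~\ref{lemma:W-game}, Bob profits by guessing $\hat{x}\sim p_f$, contradicting equilibrium.
\item \emph{Third part.} We repeat the Case~1/Case~2 dichotomy with the bound \eqref{eq:Bobs-gain} and the same threshold $\Lambda(\varepsilon)=\frac{\|p\|^2}{2\varepsilon}\bigl(1-\frac{1}{|\cX|}\bigr)$. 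In Case~1 no equilibrium exists. In Case~2 the decoupling $X'\sim p$ independent of $F$ gives $\text{SW}\le\beta(G)+\varepsilon$.
\end{itemize}

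The only step that needs real care is Alice's equilibrium condition. A deviation may correlate her reported $x'$ with the received advice $a$, and this could conceivably alter $\EE W$. It does not: Bob's $\widehat{X}$ is sampled independently of everything, and $Wp=0$ holds for every row distribution, so $\EE W=0$ regardless. Beyond that, the bound $V\le 1$ together with the already-attained value $1$ closes the argument, exactly as in the quantum case.
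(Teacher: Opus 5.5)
Your proposal is correct and takes the paper's route: rerun the proof of Theorem~\ref{thm:G-tilde:quantum-equilibrium} with a value-$1$ no-signalling behaviour in place of the quantum one, observe that $\widetilde Q$ is local and a communication equilibrium, and note that the second and third parts never use quantumness. Your explicit local model via the $y$-independent marginal $Q_A(a|x)$ spells out what the paper means by ``clearly local''; also note that the independence of $X$ from everything Bob sees, which you invoke for his equilibrium condition, is exactly where no-signalling from Alice to Bob ($\sum_a Q(a,b|x,y)$ independent of $x$) enters.
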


%\textcolor{red}{Note that this can be expressed entirely in terms already existing in the mainstream game theory literature: correlated equilibrium, belief-invariantand possibly general communication equilibrium, ... We could start with this, even with the special case of CHSH and the Popescu-Rohrlich box \cite{PopescuRohrlich:no-signalling}. Then we proceed to show that some of these advantages can be achieved with quantum correlation without the need for a mediator processing the types.}

%\textcolor{green}{Btw, did you know that Tsirelson had worked on game theory, too? With Landsberger and Solan, and even on Bayesian games!}

\medskip
Finally, we make the game construction general for any number $n$ of players, and state the corresponding theorem about quantum and classical 
correlated equilibria (omitting the proofs, which are similar to the two-player case). 

We start from a nonlocal game of $n$ players with winning predicate $V$, where the settings $x_i$ are sampled independently from distributions $p_i$, \emph{i.e.} $p(x) = p_1(x_1)\cdots p_n(x_n)$. 
In the modified game $\widetilde{G}^\Lambda$ the payoff for each player is the sum of the original game payoff $V$ and a suitably scaled $W$ matrix payoff. Out of the $n$ players, $n-1$ are classical players, being advised to output $(x_i, a_i),$ where $\cX_i \ni x_i \sim p_i$. They all have the trivial type $\{*\}$. The $n$-th player is the quantum player with assigned, non-trivial type $x_n$. They receive the post-measurement quantum state with all the $n-1$ classical players' subsystems traced over. The quantum player proceeds to measure the state with the original quantum strategy's set of POVMs to generate their output $a_n$, as well as outputting a guess $\hat{x}_i$ at all other classical players' choice of $x_i$.
The payoff functions penalize all $n-1$ classical players for the quantum one's correct guesses, while rewarding them for the wrong ones. Accordingly, the quantum player is rewarded for the correct guesses and penalized for incorrect ones.

Formally, we define the $W$ matrix for $n$ players as the following: 
\[
W_{x_{-n},\hat{x}} = \begin{cases}
                    (\prod_{i=1}^{n-1}\vert \cX_i \vert - 1) \prod_{i=1}^{n-1} p_i(x_i)^{-2}       & \text{ if } \hat{x} = x_{-n}, \\
                    -\prod_{i=1}^{n-1} p_i(x_i)^{-1}p_i(\hat{x}_i)^{-1} & \text{ if } \hat{x} \neq x_{-n},
                  \end{cases}
\]
where $x_{-n}$ denotes the $n-1$ classical players' inputs to the original nonlocal game, $x_1\ldots  x_{n-1}$ and $\hat{x} = \hat{x}_1 \ldots  \hat{x}_{n-1}$ the quantum player's guess. 
The payoffs are then defined to be
\[
\begin{split}
    u_i &=  V(x,a) - \frac{2}{n-1}\Lambda W_{x_{-n},\hat{x}}\\
    u_n &=  V(x,a) + \Lambda W_{x_{-n},\hat{x}}\\
\end{split}
\]

If we imagine grouping the $n-1$ classical players' together, we see that the above described $n$- player game modification is in essence identical to the scenario where only one classical player is playing all $n-1$ outputs, making the argument and proofs from the two-player scenario equally applicable to the multi-player scenario. This results in the following analogues of Theorems \ref{thm:G-tilde:quantum-equilibrium} and \ref{thm:G-tilde:local-comm-equilibrium}:

\begin{theorem}
  \label{thm:G-tilde:quantum-equilibrium:multi}
  If $G$ is an $n$-player quantum pseudo-telepathy game, then for any $\Lambda>0$ 
  the game $\widetilde{G}^\Lambda$ defined above has a quantum correlated 
  equilibrium in the state $\omega$ and the POVMs $(B^y_b:b\in\cB)$ above, 
  which achieves social welfare $1$. 
  
  The corresponding behaviour $\widetilde{Q}$ defined above is not the 
  behaviour of any classically correlated equilibrium. 
  Furthermore, for every $\varepsilon>0$ there exists a $\Lambda(\varepsilon)$ 
  such that for all $\Lambda \geq \Lambda(\varepsilon)$, the maximum social 
  welfare over all classically correlated equilibria is $\leq \beta(G)+\varepsilon$.
  \hfill$\blacksquare$
\end{theorem}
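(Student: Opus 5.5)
The plan is to reduce Theorem~\ref{thm:G-tilde:quantum-equilibrium:multi} to the two-player Theorem~\ref{thm:G-tilde:quantum-equilibrium} by grouping the $n-1$ classical players into one super-player ``Alice''. Her action is $(x_{-n},a_{-n})\in\cX_{-n}\times\cA_{-n}$, and the quantum player $n$ plays the role of Bob. Under this grouping the matrix $W_{x_{-n},\hat{x}}$ is exactly the two-player $W$ for the product distribution $p_{-n}=p_1\times\cdots\times p_{n-1}$ on the alphabet $\cX_{-n}$. Indeed, $\prod_i p_i(x_i)^{-1}p_i(\hat{x}_i)^{-1}=p_{-n}(x_{-n})^{-1}p_{-n}(\hat{x})^{-1}$ and $|\cX_{-n}|=\prod_{i<n}|\cX_i|$. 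So Lemma~\ref{lemma:W-game} applies verbatim: $W$ is positive semidefinite, $Wp_{-n}=0=p_{-n}^\top W$, and the gain bound \eqref{eq:Bobs-gain} holds with $\|p_{-n}\|$. The state $\omega$ is defined as before, with Alice's register holding $(x_{-n},a_{-n})$ drawn from $\prod_i p_i(x_i)\Tr\rho(\bigotimes_i A^{x_i}_{a_i}\otimes\1)$ and $\hat{x}\sim p_{-n}$. It is fully separable and classical on players $1,\dots,n-1$. For the multipartite reading one should note that the classical register can be split into independent-looking local copies, each player $i$ holding $(x_i,a_i)$ from a common classical sample.

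\emph{Equilibrium part.} Player $n$'s condition is identical to before: $\EE W=0$ for any distribution of $\hat{X}$, and $B^{x_n}$ already attains $V=1$. For each classical player $i$, the same argument works, with one extra step. A unilateral deviation by $i$ alone changes only the law of $x_i$ (and $a_i$). Because $\hat{X}\sim p_{-n}$ is independent of $X_{-n}$, we get $\EE W_{X_{-n},\hat{X}}=\sum_{x_{-n}}P'(x_{-n})(Wp_{-n})_{x_{-n}}=0$ for any law $P'$. Meanwhile $V\le 1$ is already saturated, so the deviation cannot help.

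\emph{Non-classical and welfare parts.} Any classically correlated equilibrium of the $n$-player game yields jointly distributed $(X_{-n},A_{-n})$ and a random function $F:\cX_n\to\cA_n\times\cX_{-n}$. The shared randomness for players $1,\ldots,n-1$ may be correlated arbitrarily among themselves. I will treat their joint recommendation as one random variable and repeat the proof of Theorem~\ref{thm:G-tilde:quantum-equilibrium} with $X=X_{-n}$.

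If $X_{-n}$ were independent of $F$, then conditioning on $F$ would express $Q$ as a local behaviour. The conditional law of $A_{-n}$ given $(X_{-n},F)$ must factorise appropriately across the classical players, since each $A_i$ is produced by player $i$'s local function of the shared variable. So $Q$ would be local, contradicting $\tau=1>\beta(G)$. The dependence then lets player $n$ gain by guessing from $p_f$.

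For the quantitative part, I split into the same two cases on $\sum_f\mu(f)\frac12\|p_{-n}-p_f\|_1\gtrless\varepsilon$.

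Case 2 again gives $\EE V\le\beta(G)+\varepsilon$ through the resampled $X'_{-n}\sim p_{-n}$. Here $X'$ is a product of independent $p_i$, which is what $\beta(G)$ refers to.

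Case 1 is the main obstacle. In the two-player proof, Alice simply switches to $X\sim p$ to restore $\EE W=0$. Here no single classical player controls all of $X_{-n}$, and a unilateral deviation of player $i$ can only resample $x_i\sim p_i$ independently. My plan is to show this already suffices. If player $i$ draws $x_i\sim p_i$ independently of everything, then $\EE W$ becomes $\sum P(x_{-i})p_i(x_i)\EE[W_{x,\hat{X}}]$. Since $W$ factorises as $\prod_i(\text{diag}\,p_i^{-1})\,(|\cX_{-n}|\1-J)\,\prod_i(\text{diag}\,p_i^{-1})$, I will verify that for each fixed $\hat{x}$ we have $\sum_{x_i}p_i(x_i)W_{x,\hat{x}}\le 0$; if $|\cX_i|\ge 2$, this should make the penalty non-positive for the deviator. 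Hence in equilibrium each classical player's payoff is $\ge 0$. Taking $\Lambda(\varepsilon)$ large, as before, so that $\frac{2}{n-1}\Lambda\EE W>1$ then contradicts this. If this inequality fails, I would instead adjust the payoff scaling to a sum of per-player penalties, and check that the theorem's ``suitably scaled'' $W$ permits it.
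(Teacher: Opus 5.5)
Your grouping reduction is the paper's own argument; its only proof of this theorem is the remark that players $1,\dots,n-1$ can be merged into one Alice. Your equilibrium part is fine.

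\textbf{The gap.} The genuine gap is your claim that the conditional law of $A_{-n}$ given $(X_{-n},F)$ ``must factorise appropriately across the classical players''. Those players have trivial types. So in a classically correlated equilibrium the mediator draws their entire recommendation $(x_1,a_1,\dots,x_{n-1},a_{n-1})$ jointly with $F$. Then $A_i$ may depend on $x_j$ for $j\neq i$, and any behaviour $R(a_{-n}|x_{-n})$ among them, even a signalling one, can be embedded. Consequently, both in the non-classicality argument and in Case~2 you only obtain a hidden-variable model across the cut $\{1,\dots,n-1\}|n$. The resulting bound is the classical value $\beta_{\mathrm{cut}}(G)\geq\beta(G)$ of $G$ with players $1,\dots,n-1$ merged, not $\beta(G)$.

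\textbf{Why it cannot be repaired.} The statement fails for the paper's own GHZ example, where $\beta_{\mathrm{cut}}=1$. Let the mediator draw $x,y,\hat x,\hat y$ as independent uniform bits and recommend $(x,\,x\vee y)$ to Alice, $(y,0)$ to Bob, and $z\mapsto(0,\hat x\hat y)$ to Charlie.
\begin{itemize}
\item On every pool input $a\oplus b\oplus c=x\vee y=x\vee y\vee z$, so $V\equiv 1$, and $V\leq 1$ is already saturated.
\item Charlie's guess is uniform and independent of everything Alice and Bob see or do, and his own information is independent of $(x,y)$. By $Wp=0=p^\top W$, every unilateral deviation leaves the expected $W$-term at $0$.
\end{itemize}
This is a classically correlated equilibrium with social welfare $1$ for every $\Lambda$, although $\beta=7/8$. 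Moreover, the GHZ behaviour is local across $AB|C$, because each $(x,y)$ completes to a pool input for exactly one $z$. A similar construction therefore reproduces $\widetilde{Q}$ itself as a correlated-equilibrium behaviour. What the grouping argument (yours or the paper's) actually proves is parts two and three with $\beta(G)$ replaced by $\beta_{\mathrm{cut}}(G)$, under the extra hypothesis $\beta_{\mathrm{cut}}(G)<1$.

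\textbf{Case~1.} Your Case~1 concern is not a real obstacle, and the fix you propose rests on a false inequality. Write $N=\prod_{j<n}|\cX_j|$, and $x_{\neq i}$ for the coordinates of $x_{-n}$ other than $x_i$, with product law $p_{\neq i}$. Then
\[
  \sum_{x_i}p_i(x_i)\,W_{x_{-n},\hat x}
  =\frac{N\,\delta_{x_{\neq i},\hat x_{\neq i}}-|\cX_i|}{p_{\neq i}(x_{\neq i})\,p_{-n}(\hat x)},
\]
which is positive whenever $x_{\neq i}=\hat x_{\neq i}$ and $N>|\cX_i|$. So a lone deviator cannot neutralise a well-informed guesser. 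You do not need any classical player's deviation, nor any rescaling of payoffs. The payoffs sum to $nV-\Lambda W$, so $\mathrm{SW}=\EE V-\frac{\Lambda}{n}\EE W$. Player $n$'s own equilibrium condition, applied to deviations in $\hat x$ alone, forces $\EE W>\Delta(\varepsilon)$ in Case~1 via Eq.~\eqref{eq:Bobs-gain}, computed for $p_{-n}$. Hence $\mathrm{SW}<1-\frac{\Lambda}{n}\Delta(\varepsilon)<0$ once $\Lambda\geq n/\Delta(\varepsilon)$.
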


\begin{theorem}
  \label{thm:G-tilde:local-comm-equilibrium-multi}
  If $G$ is an $n$-player no-signalling pseudo-telepathy game, then for any $\Lambda>0$ 
  the game $\widetilde{G}^\Lambda$ defined above has a communication equilibrium 
  $\widetilde{Q}$, which is a local behaviour and achieves social welfare $1$. 
  
  At the same time, this behaviour $\widetilde{Q}$ is not the 
  behaviour of any classically correlated equilibrium. 
  Furthermore, for every $\varepsilon>0$ there exists a $\Lambda(\varepsilon)$ 
  such that for all $\Lambda \geq \Lambda(\varepsilon)$, the maximum social 
  welfare over all classically correlated equilibria is $\leq \beta(G)+\varepsilon$.
  \hfill$\blacksquare$
\end{theorem}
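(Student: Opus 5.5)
The plan is to reduce Theorem \ref{thm:G-tilde:local-comm-equilibrium-multi} to the two-player Theorem \ref{thm:G-tilde:local-comm-equilibrium} by treating the $n-1$ classical players as one block. The block is a super-player with alphabet $\cX_{-n}=\cX_1\times\cdots\times\cX_{n-1}$, prior $p_{-n}=p_1\times\cdots\times p_{n-1}$, and output $a_{-n}$. The matrix $W$ on $\cX_{-n}$ has exactly the form $P^{-1}(|\cX_{-n}|\1-J)P^{-1}$ with $P=\operatorname{diag}(p_{-n})$. Lemma \ref{lemma:W-game} therefore applies verbatim, including the bound \eqref{eq:Bobs-gain} with $\|p_{-n}\|$ and $|\cX_{-n}|$ in place of $\|p\|$ and $|\cX|$.

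\emph{Equilibrium part.} The advice is
\[
\widetilde{Q}(x_{-n}a_{-n},b\hat{x}|\ast,y)=p_{-n}(x_{-n})p_{-n}(\hat{x})N(a_{-n},b|x_{-n},y),
\]
where $N$ is a no-signalling behaviour attaining $\nu(G)=1$. This is a communication equilibrium. For player $n$, $p_{-n}^\top W=0$, so any guess gives $\EE W=0$, and $V\leq 1$ is already attained. For a classical player $i$, any deviation leaves the other classical players' $X_j\sim p_j$ independent. Hence the deviated block distribution is $p'_i\times p_{-i,-n}$, and since $\hat X\sim p_{-n}$ independently and $Wp_{-n}=0$, we still have $\EE W=0$. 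Again $V\leq1$ is already maximal. Locality holds because all players but one have trivial input, so the behaviour is generated by shared randomness. SW $=1$ is immediate: every $u_i$ has expectation $V=1$.

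\emph{Non-classicality and welfare bound.} Take a classically correlated equilibrium given by jointly distributed $(X_{-n},A_{-n},F)$ with $F:\cY\to\cB\times\cX_{-n}$, and set $p_f=\Pr\{X_{-n}=\cdot\,|F=f\}$. Player $n$ can always guarantee $\EE W\ge 0$ by guessing from $p_{-n}$. Now split into the two cases of the two-player proof according to whether $\sum_f\mu(f)\tfrac12\|p_f-p_{-n}\|_1$ exceeds $\varepsilon$.

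In case 1, player $n$ guessing a point mass $x'_f$ yields $\EE W>\Delta(\varepsilon)$. Then every classical player's payoff falls below $1-\tfrac{2}{n-1}\Lambda\Delta(\varepsilon)<0$ once $\Lambda\ge\Lambda(\varepsilon)$, with $\Lambda(\varepsilon)$ suitably rescaled by $(n-1)/2$.

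This step is the main obstacle. Unlike the two-player case, a single classical player $i$ cannot resample the whole $X_{-n}$; she can only replace her own $X_i$ by a fresh $X_i\sim p_i$ independent of everything. That does not make $W$ vanish by itself. My first attempt is to write $W=\prod_i P_i^{-1}\cdot(\text{rank-one correction})$ and check that, when one coordinate $X_i$ is drawn independently from $p_i$, the expectation $\EE W_{X_{-n},\hat X}$ is at most $0$. Concretely, $\EE[W|\ldots]$ then reduces to $-\prod_{j}$ terms averaging to $0$ over $x_i$ except on the diagonal, and I would verify that the diagonal term averages nonpositively. If that fails, I would use the collective-deviation reading implicit in the grouping statement, where the classical block acts as one player, under which the two-player argument transfers literally.

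Case 2 is then verbatim. Replace $X_{-n}$ by $X'_{-n}\sim p_{-n}$ independent of $F$, keeping $A_{-n}$'s conditional law. This changes $\EE V$ by at most $\varepsilon$ in total variation, and the modified behaviour is local with hidden variable $F$. Hence $\EE V\le\beta(G)+\varepsilon$, and averaging the payoffs with $\EE W\ge0$ gives SW $\le\beta(G)+\varepsilon$. Taking $\varepsilon$ small, using $\beta(G)<1$, shows in addition that $\widetilde{Q}$ itself, which has $p_f=p_{-n}$ forced and $V=1$, is not a correlated equilibrium behaviour.
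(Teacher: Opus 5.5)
There is a genuine gap in case~2, and it cannot be closed: the statement is false as written. In the two-player proof, $\Pr\{A=a|X=x,F=f\}$ is the response of a single party, so $F$ works as a local hidden variable. Here $\Pr\{A_{-n}=a_{-n}|X_{-n}=x_{-n},F=f\}$ is a joint law for the whole classical block. It is chosen by a mediator who knows every $x_j$ (they are part of the advice), so $a_i$ may depend on $x_j$ for $j\neq i$.

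After your replacement $X'_{-n}\sim p_{-n}$, the behaviour is local only across the cut $\{1,\dots,n-1\}|\{n\}$. What you actually get is $\EE V\le\beta_{\mathrm{cut}}(G)+\varepsilon$, where $\beta_{\mathrm{cut}}(G)\ge\beta(G)$ is the classical value of $G$ with players $1,\dots,n-1$ merged into one party. Your proof of the second claim is case~2 with $\varepsilon=0$, so it fails the same way.

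For GHZ, $\beta_{\mathrm{cut}}=1$, and this is exploitable. In $\widetilde{\mathrm{GHZ}}^\Lambda$, let the mediator draw $x,y,a$ as independent uniform bits, set $b=a\oplus(x\vee y)$, and tell Charlie $c=0$ plus an independent uniform guess $\hat x\hat y$.
\begin{itemize}
\item On every pool input, $a\oplus b\oplus c=x\vee y=x\vee y\vee z$, and the other inputs are won automatically, so $V\equiv1$.
\item Charlie's guess is uniform and independent of everything Alice or Bob know, so any deviation of theirs keeps $\EE W=0$, because $Wp=0$.
\item Charlie learns nothing about $xy$, so any guess of hers gives $\EE W=0$, because $p^\top W=0$.
\end{itemize}
This is a classically correlated equilibrium with social welfare $1$ for every $\Lambda$, while $\beta=\frac78$.

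The second claim fails too. The GHZ correlation is local across $AB|C$: take hidden uniform bits $c_0,c_1$; Charlie outputs $c_z$; the pair outputs a uniform $a$ and $b=a\oplus d_{xy}$, with $d_{00}=c_0$, $d_{01}=d_{10}=c_1\oplus1$ and $d_{11}=c_0\oplus1$. Adding an independent uniform guess, this realises $\widetilde Q$ itself (built from $N=Q_{\mathrm{GHZ}}$) as a correlated-equilibrium behaviour. The paper's own justification, which groups the $n-1$ classical players and reruns the two-player proof, has exactly this hole.

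Your case~1 obstacle, by contrast, is not real. We have $\text{SW}=\EE V-\frac{\Lambda}{n}\EE W$, and $\EE W>\Delta(\varepsilon)$ follows from player $n$'s best response alone. Hence $\text{SW}<1-\frac{\Lambda}{n}\Delta(\varepsilon)\le 0$ once $\Lambda\ge n/\Delta(\varepsilon)$, and no classical player's deviation is needed.

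Your first attempt would have failed anyway. If player $i$ redraws $X_i\sim p_i$ independently, $\EE W$ becomes $|\cX_i|\,\EE\bigl[p_i(\hat X_i)^{-1}W'\bigr]$. Here $W'$ is the analogous positive semidefinite matrix of the other $n-2$ classical players, evaluated at their true and guessed settings. This is strictly positive, for example, when player $n$'s guess reproduces those settings. The block-deviation fallback changes the equilibrium concept, and it would not rescue case~2, since the GHZ equilibrium above also survives block deviations.

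With the welfare identity in case~1 and $\beta_{\mathrm{cut}}(G)$ in place of $\beta(G)$ in case~2, your outline does prove a corrected theorem. It gives social welfare at most $\beta_{\mathrm{cut}}(G)+\varepsilon$, and non-classicality of $\widetilde Q$ whenever $\beta_{\mathrm{cut}}(G)<1$. Your existence part is fine as it stands.
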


\medskip
\begin{remark}
The application of the theorems is not restricted 
to using the optimal quantum or no-signalling strategy, nor indeed to starting with a pure state. In the quantum case, it is enough that Bob's measurements are ``locally'' optimal for the given state $\rho$, and the measurements that Alice and her sisters (Edith, Lorina, etc) make. This is enough to satisfy the equilibrium conditions for Bob -- for the Alices they amount to checking the classical, non-Bayesian conditions.

Similarly, in the no-signalling case, we could simply fix the overall behaviour that attains an advantage in the game $G$ and proceed from there, requiring only the equilibrium condition for this 
cooperative game for Bob.
\end{remark}

\section{Numerical evaluation of the quantum-vs-classical advantage}
\label{sec:numerics}
Here we investigate some explicit games coming from the construction in 
Section \ref{sec:main}, with two objectives: on the one hand, in the case 
that we start from a pseudo-telepathy game, we know already that the 
cq-state $\omega$ provides a quantum correlated equilibrium, and we 
can use linear programming \cite{Chvatal} to optimise the social welfare over 
classically correlated equilibria to show an explicit gap \cite{MaschlerSiolanZamir:GameTheory} (our argument in the preceding section is more of a proof of principle). Indeed, the conditions for a correlated equilibrium (in canonical form) of a Bayesian game form a finite list of linear inequalities in the probability distributions on $\cS = \cA_1^{\cT_1}\times\cdots\times\cA_n^{\cT_n}$.
On the other hand, for general Bell inequalities we would also need 
to check the quantum correlated equilibrium condition, which in the 
worst case boils down to checking finitely many cases (of possible 
deviations of Alice from her advice). In both scenarios we can additionally 
investigate if the quantum correlated equilibrium is stable in the sense 
that any nontrivial deviation from the advice leads to a strictly worse outcome for the player concerned. 

When presenting the results, we adhere to the following unified conventions.
First, with the construction of classical correlation as a resource, we calculate, for $\Lambda$ ranging from $0$ to $0.5$, the set of correlated equilibria for the modified game $\widetilde{G}^\Lambda$. Out of the classically correlated equilibria, we select, also for each $\Lambda$, the equilibria with: a) maximum original game(G) score $\EE V$, labelled ``Corr($\widetilde{G}^\Lambda$) max.V'' and b) maximum social welfare, as defined per the expression \eqref{eq:social_welfare}, labelled ``Corr($\widetilde{G}^\Lambda$) max.SW''. 

For both equilibria a) and b), their original game scores are compared to the classical bound of the game score $\beta(G)$ and the quantum bound of the original game, labelled ``Quantum(G)''. Then, we compare the individual utilities of the modified game for both equilibria, with labels specifying the party concerned ``$E[U\_*]$''. Finally, we compare the social welfare of the modified game for both equilibria and the reference bounds, where the equilibria induced by the separable quantum advice are labelled ``Quantum\_sep($\widetilde{G}^\Lambda$)''. Note that as argued in previous sections, the quantum separable equilibria achieve the social welfare identical to the quantum bound on the original game score.

For computationally feasible games, we also compare the correlated equilibria to Nash equilibria of the modified game. The Nash equivalent of a) and b) are labelled ``Nash (best V)'' and ``Nash (best SW)''.

\subsection{Magic square game}
The magic square (aka Peres-Mermin) game is a two-player quantum pseudo-telepathy game introduced over a series of papers by Asher Peres \cite{Peres:game}, David Mermin \cite{Mermin:game} and P. K. Aravind \cite{Aravind:magic-square}. 

The game features a hypothetical $3 \times 3$ table with entries $\pm 1$. Alice receives a row number as her input and must output entries for that row, while Bob receives a column number and similarly, fills out the column. To win the game, Alice's row must multiply to $+1$, and Bob's column to $-1$. Where Alice's row and Bob's column intersect, the entries reported must be consistent.

Classically, the game can be won with probability $\frac{8}{9}$ by two pre-filled tables for Alice and Bob, respectively. Of course, each entry where the two tables differ contributes a $\frac19$ chance of losing the game. It is easy to see that, no matter how they are constructed, there will be at least one row-column combination that does not satisfy the winning condition: indeed, if they could win with probability $1$, the two tables must be the same, but the product condition of Alice means that the product of all nine entries is $+1$ while the product condition for Bob means that the product is $-1$. This contradiction shows the claim, and it is easy to come up with tables that satisfy the product conditions of the rows/columns for Alice/Bob and differ in exactly one entry. 

With a quantum strategy, however, the game can be won with unit probability by exploiting the noncommutativity of quantum observables. 
Concretely, Alice and Bob could share the following quantum state:
$$|\psi\rangle = \frac{1}{2}(\vert 00\rangle + \vert11\rangle)_{A_1B_1} \otimes (\vert 00\rangle + \vert11\rangle)_{A_2B_2} ,
$$
where $\vert 0 \rangle$ and $\vert 1 \rangle$ are eigenstates of the Pauli matrix $\sigma_z$ with $+1$ and $-1$ eigenvalues respectively. Alice is given the qubits $A_1$ and $A_2$, while Bob gets $B_1$ and $B_2$.  
Upon receiving the row/column number, Alice/Bob chooses the corresponding one out of three bases. The resulting state from the measurement allows the final reply to be directly read out.
%A possible set of measurements for Alice and Bob is given below.
%\begin{align*} \label{ms_alice_meas}
%x&=0: \sigma_z \otimes \sigma_z   &  y&=0: -\sigma_x \otimes \sigma_z\\
%x&=1: \sigma_x \otimes \sigma_x   &  y&=1: -\sigma_z \otimes \sigma_x\\
%x&=2: \sigma_y \otimes \sigma_y   &  y&=2: \sigma_y \otimes \sigma_y
%\end{align*}
To retrieve the responses, one may refer to the following table, which in each row and column contains mutually commuting observables that the two parties measure on their respective qubit pairs: 
\begin{table}[ht]
%    \begin{center}
      \begin{tabular}{|r|r|r|}
        \hline
        $I \otimes Z$\ {} & $Z \otimes I$\ {} & $Z \otimes Z$\ {} \\
        \hline
        $X \otimes I$\ {} & $I \otimes X$\ {} & $X \otimes X$\ {} \\
        \hline
        $-X \otimes Z$\ {} & $-Z \otimes X$\ {} & $\phantom{-}Y \otimes Y$\ {} \\
        \hline
        \end{tabular}
%    \end{center}
    \caption{Mermin–Peres magic square: $X,Y,Z$ denote the Pauli matrices, $I$ the identity matrix.}
    \label{tab:ms_alg_table}
\end{table}

% \begin{align*}
% &\ket{\a_{++}}: \frac{\ket{00+01+10-11}}{2}  &  &\ket{\b_{++}}: \frac{\ket{00+11}}{\sqrt{2}}\\
% &\ket{\a_{+-}}: \frac{\ket{00-01+10+11}}{2}  &  &\ket{\b_{+-}}: \frac{\ket{01+10}}{\sqrt{2}}\\
% &\ket{\a_{-+}}: \frac{\ket{00+01-10+11}}{2}  &  &\ket{\b_{-+}}: \frac{\ket{00-11}}{\sqrt{2}}\\
% &\ket{\a_{--}}: \frac{\ket{00-01-10-11}}{2}  &  &\ket{\b_{--}}: \frac{\ket{01-10}}{\sqrt{2}}\\
% \end{align*}
The quantum behaviour possesses two key properties that enable the players to win with probability one for all possible type combinations (input row and column number from the referee). 
The product of each row in the Mermin-Peres square is $I \otimes I$, while for each column $-I \otimes I,$ meaning that the products of the row and column are respectively $1$ and $-1,$ as required to win the game. Secondly, one may check that for every $x$ and $y$, $\langle \psi\vert A^x_{a_{y}} \otimes B^y_{b_{x}} \vert \psi\rangle = 1,$ guaranteeing the cell entry where the row and column intersect is consistent between Alice and Bob. 

In the modified game $\widetilde{MS}^\Lambda$ the advice to the players consists of $(x,a)$ for Alice and $\Tr_{A_1A_2}(A^x_a \otimes \1)\proj{\psi} \otimes \proj{\hat{x}}$ for Bob, so that 
\[ 
  \omega^{A'B'} = \frac{1}{36} \sum_{\substack{x=0,1,2\\\hat{x}=0,1,2\\ a}} \proj{x,a}^{A'} \ox (\Tr_{A_1A_2}(A^x_a \otimes \1)\proj{\psi} \ox \proj{\hat{x}})^{B'},
\]
The players receive these payoffs:
\[
\begin{split}
    u_A &= V(x,a,y,b)- 2\Lambda W,\\
    u_B &= V(x,a,y,b)+ \Lambda W,
\end{split}
\]
where 
\begin{equation}
  W_{x,\hat{x}} = \begin{cases}
                    \phantom{-}18  & \text{ if } \hat{x} = x, \\
                    -9 & \text{ if } \hat{x} \neq x,
                  \end{cases}
\end{equation}

When Alice plays the suggested $(x,a)$ and Bob applies the same measurements as in the original game, they win the Magic Square game with probability 1. When Bob plays $\hat{x}$ also according to the advice, which is sampled uniformly, $\EE W = p^\top W p = 0.$ The overall expected utility for both Alice and Bob remains 1, leading to a social welfare of 1.

To compare this result to that of the correlated equilibria, we define the strategies considered. Alice and Bob receive the correlated advice of $(x,a,f),$ where $f:y \mapsto (b,\hat{x})$. There are, in total, $3\times 4 \times (4^3 \times 3^3) = 20736$ possible instances of advice, and we are optimising a probability distribution over that alphabet. 

Due to the immense size of the parameter space, we compute correlated equilibria for a limited sample of $\Lambda$s.
\begin{figure}[ht]
    \includegraphics[width=\linewidth]{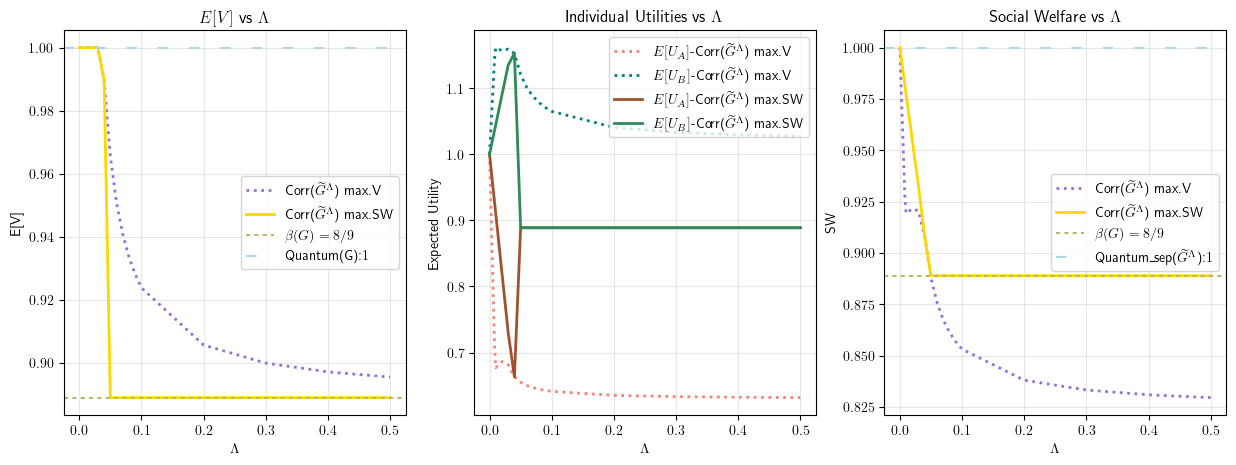}
    \makeatletter\long\def\@ifdim#1#2#3{#2}\makeatother
    \caption{\textbf{Left:} Magic square game score for correlated equilibria optimized for original game score and social welfare of modified game, all compared to classical bound of $\frac{8}{9}$ and the quantum bound of the original game score(also the perfect play score) $1$. \textbf{Middle:} Individual utility for players in correlated equilibria. Solid lines denote correlated equilibria that maximize the social welfare of the modified game, while dashed lines those that maximize the original game score. \textbf{Right:} Social welfare of correlated equilibria and quantum correlated equilibria. Refer to the beginning of Section \ref{sec:numerics} for the label convention.}
    \label{fig:MS_CE}
\end{figure}

\subsection{GHZ game}
The Greenberger-Horne-Zeilinger (GHZ) game is a three-player quantum pseudo-telepathy game originating 
in \cite{GHZ}. Unlike the original version, where the types are 
restricted to a subset satisfying a combinatorial constraint, here 
we describe it with independent uniform types; if the types do not 
satisfy the GHZ constraint, the players automatically win. 

In the original version of the GHZ game, there are three players, we call them Alice, Bob and Charlie. All three players play against a referee, who supplies binary inputs $(x,y,z)$ to the players, while the players respond with binary outputs $(a,b,c)$. Specifically to the original version, the input types are drawn uniformly from the following pool:
$$(x,y,z) \in \{(0,0,0), (1,1,0),(1,0,1),(0,1,1)\}.$$
The players win iff 
\[
  a\oplus b \oplus c 
    = x \vee y \vee z
    = \begin{cases}
        0 & \text{if } x=y=z=0, \\
        1 & \text{otherwise}.
      \end{cases}
\]  
In the present version of the GHZ game, the input types for players are drawn uniformly out of all possible binary combinations, 8 in total. When the input type is a member of the pool, the game proceeds as usual. When the input type is not a member of the pool, the players win the game automatically. This is necessary to have independent inputs so as to apply Theorem \ref{thm:G-tilde:quantum-equilibrium:multi}.

In the classical game, the players can discuss prior to the game and carry out a pre-communicated strategy. This yields a classical winning chance of $\frac{7}{8}$.
In the quantum strategy, the three players share a tripartite entangled state, the GHZ state:
$$\vert \psi\rangle = \frac{1}{\sqrt{2}}(\vert 000\rangle + \vert 111\rangle).$$
In case of their individual input type being $0$, they measure their bit in the $X$ basis and otherwise in the $Y$ basis. With this strategy, the players win the game with probability 1.
We apply the multi-player modification to the GHZ game, the resulting game $\widetilde{\text{GHZ}}^\Lambda$ has the payoff functions
\[\begin{split}
    u_A = u_B &=  V(x,a,y,b,z,c) - \Lambda W_{xy,\hat{x}\hat{y}}, \\
    u_C &= V(x,a,y,b,z,c) + \Lambda W_{xy,\hat{x}\hat{y}},\\
\end{split}\]
where both $xy$ and $\hat{x}\hat{y}$ range over four possibilities $\{0,1\}^2$. As all types $x,y,z$ are sampled uniformly in the nonlocal game, 
\begin{equation}
  W_{xy,\hat{x}\hat{y}} 
    = \begin{cases}
         \phantom{-}48  & \text{if } xy=\hat{x}\hat{y}, \\
                    -16 & \text{if } xy \neq\hat{x}\hat{y}.
       \end{cases}
\end{equation}
Alice and Bob receive uniformly distribited advice $(x,a)$ and $(y,b)$, respectively. At the same time, Charlie is given the advice 
\[
  \Tr_{AB}(A^x_a \otimes B^y_b \otimes \1)\proj{\psi} \otimes \proj{\hat{x},\hat{y}}.
\]
As usual, Charlie is also provided with the same set of POVMs to measure the given state as in the nonlocal game.
With the quantum strategy for the modified game $\widetilde{GHZ}^\Lambda$, all three players share the same expected utility of $1$, where the zero-sum $W$ game has annihilated payoff.

We now compute the correlated equilibria for $\widetilde{GHZ}^\Lambda$. The correlated advice for Alice, Bob and Charlie can be parametrized as $(x,a,y,b,f)$, where $f : z \mapsto (c, \hat{x}\hat{y})$.

\begin{figure}[ht]
%    {\centering
    \includegraphics[width=\linewidth]{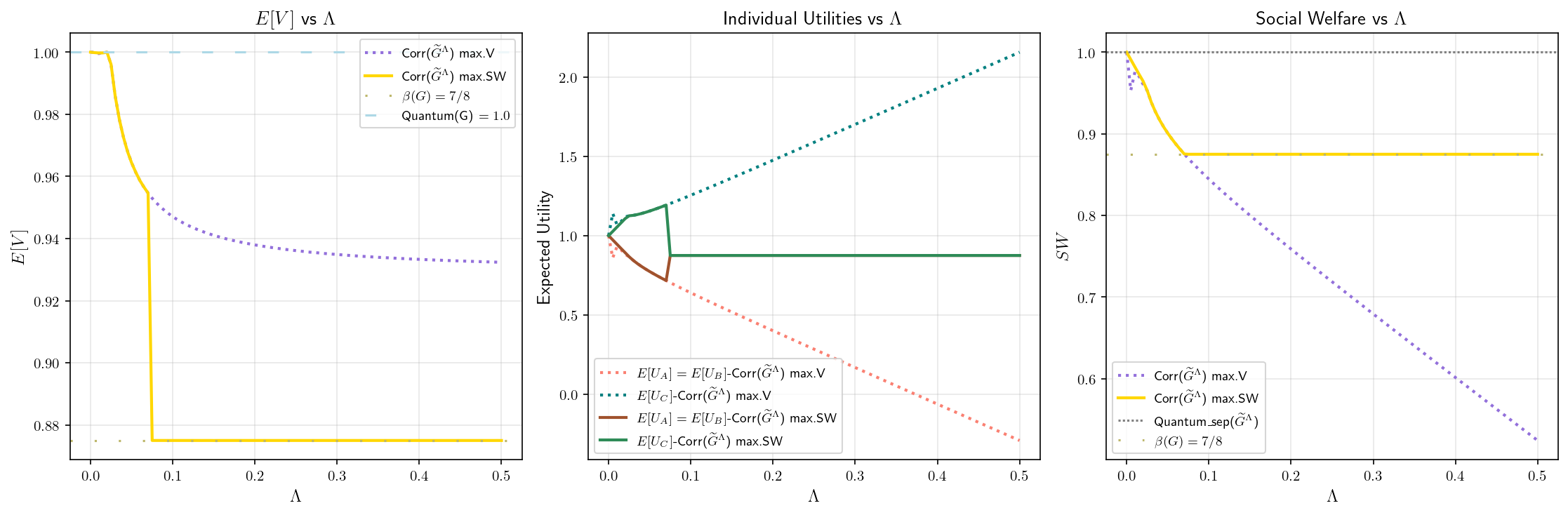}
%    }
    \caption{\textbf{Left:} Original GHZ game score for all players at correlated equilibria. Classical bound for the game is $\frac{7}{8},$ quantum bound is 1. Among the correlated equilibria, two objectives are maximized--Social welfare and GHZ score. \textbf{Middle:} Player utilities for correlated equilibria maximizing SW and GHZ score respectively. \textbf{Right:} Social welfare for correlated equilibria and quantum correlated equilibria. Refer to the beginning of Section \ref{sec:numerics} for the label convention.}
    \label{fig:GHZ_CE}
\end{figure}

\begin{figure}[ht]
    \centering
    \includegraphics[width=\linewidth]{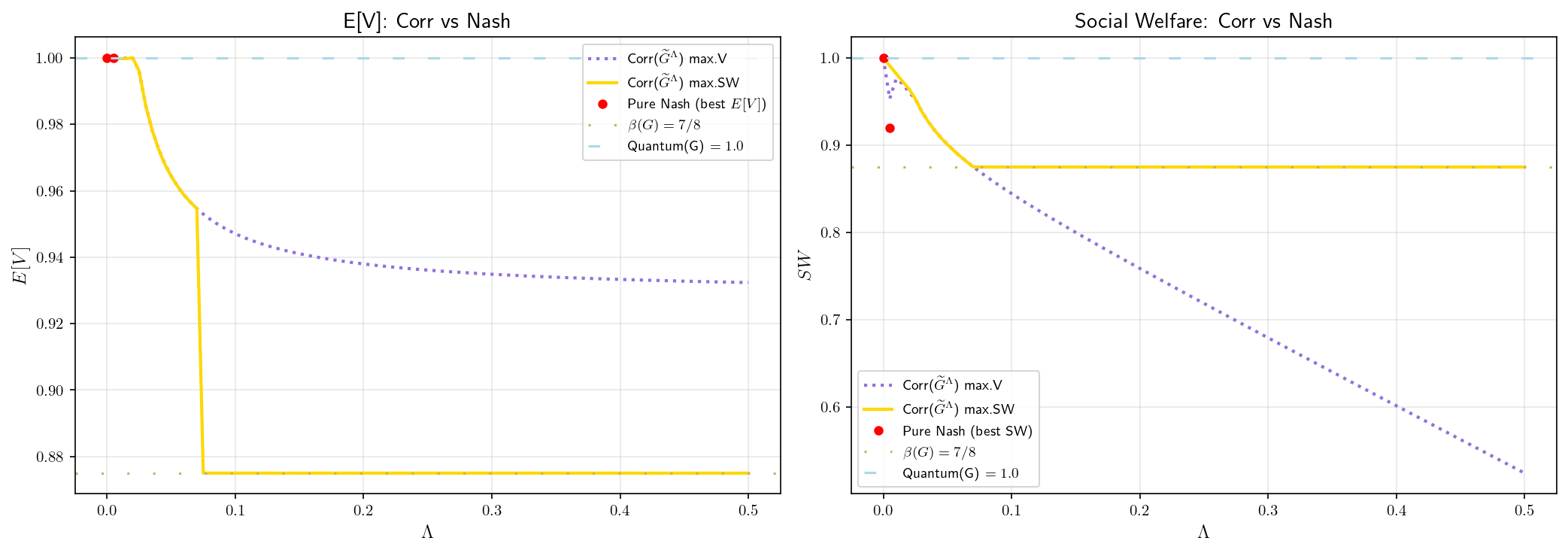}
    \caption{\textbf{Left:} GHZ score compared among Nash and correlated equilibria of the modified game. Two pure Nash equilibria exist around $\lambda = 0$, denoted with red dots. \textbf{Right:} Social welfare comparison.}
    \label{fig:GHZ_comparison}
\end{figure}

\subsection{CHSH game} 
%with quantum and no-signalling players}
The Clauser-Horne-Shimony-Holt (CHSH) game is a two-player, non-local cooperative game \cite{CHSH}. The players, Alice (A) and Bob (B) receive their inputs $x \in \cX$ and $y \in \cY$, and output their actions $a\in\cA$ and $b \in\cB$, where all inputs and outputs are binary: $\cA = \cB = \cX = \cY = \{0,1\}$. 
The types are sampled uniformly, \emph{i.e.}
\[
  p(xy) = \frac14 \text{ for all }x,y.
\]
The winning predicate is $V(a,b,x,y) = 1$ iff $a\oplus b = x\cdot y$.  

\medskip
Any pure or mixed classical strategy can win the game with a probability of at most $\frac34$. For example, the players can simply agree beforehand to align their output and only output 0 or 1 together, however, as only three out of four combinations of their possible types yield $x \cdot y = 0,$ their expected payoff is also $\frac34$ each:
\[
  \beta (\text{CHSH}) = \frac34.
\]

The optimal quantum strategy of the CHSH game famously beats the classical maximum expected payoff by achieving the Tsirelson bound:
\[
  \tau(\text{CHSH}) = \cos^2 \frac{\pi}{8} \approx 0.85.
\]
For this, Alice and Bob share a maximally entangled state $\ket{\Phi^+} = \frac{1}{\sqrt{2}}(\ket{00}+\ket{11})$ of two qubits, Alice's measurements are the Pauli $Z$ and $X$ observables, for $x=0,1$, respectively, while Bob's measurements are the rotated Pauli observables $Z'=\frac{1}{\sqrt{2}}(Z+X)$ and $X'=\frac{1}{\sqrt{2}}(Z-X)$ for $y=0,1$, respectively.

Finally, we describe the no-signalling strategy that wins the CHSH game with probability one, which is known as the Popescu-Rohrlich box:
\[
  \text{PR}(a,b|x,y) 
   = \begin{cases}
       \frac12 & \text{ if }\ a\oplus b = xy, \\
       0       & \text{ otherwise.}
     \end{cases}
\]
This attains the algebraic maximum for the no-signalling value, 
\[
  \nu(\text{CHSH}) = 1.
\]

\medskip
The modified CHSH game, where Alice now receives the trivial type and outputs $x \in \cX$ based on the advice $x \sim p$, Bob receives the binary type $y \in \cY$ and outputs both $b \in B$ as before and a guess at Alice's output $x$, $\hat{x}$.
The players' payoff are also modified according to Eqs.~\eqref{eq:payoff-A} and \eqref{eq:payoff-B}, where
\begin{equation}
  W_{x,\hat{x}} = \begin{cases}
                    \phantom{-}4  & \text{if }\ \hat{x} = x, \\
                    -4 & \text{if }\ \hat{x} \neq x,
                  \end{cases}
\end{equation}
or in matrix form
\[
  W_{x,\hat{x}} 
    = \begin{bmatrix}
        \phantom{-}4 & -4\\
       -4 & \phantom{-}4\\
      \end{bmatrix}.
\]

We want to argue that the advice state constructed in Eq.~\eqref{eq:omega} (and with the CHSH measurements for Bob) is an equilibrium of the modified game. Here it reads
\[ 
  \omega^{A'B'} = \frac18 \sum_{\substack{x,\hat{x}=0,1\\ a=0,1}} \proj{x,a}^{A'} \ox (\proj{\varphi_{a|x}} \ox \proj{\hat{x}})^{B'},
\]
where 
\[
  \ket{\varphi_{0|0}} = \ket{0},\ 
  \ket{\varphi_{1|0}} = \ket{1},\quad 
  \ket{\varphi_{0|1}} = \ket{+} := \frac{1}{\sqrt{2}}(\ket{0}+\ket{1}),\ 
  \ket{\varphi_{1|1}} = \ket{-} := \frac{1}{\sqrt{2}}(\ket{0}-\ket{1})
\] 
are the post-measurement states, which here happen to be the signal states of the famous BB84 protocol \cite{BB84}. 
In Section \ref{sec:main} we have already argued that the equilibrium conditions for Bob are satisfied, what is missing is to verify them for Alice.
Indeed, as argued before, when Bob follows the advice and makes his guess $\hat{x}$ according to $p(\hat{x})$, he is able to unilaterally annihilate the zero-sum game matrix $W_{x,\hat{x}}$. This means that Alice has no agency to increase the payoff she gets from $-\Lambda W_{x,\hat{x}}$. So it remains to check that, upon receiving the advice $\proj{x,a}$, Alice's optimal strategy is still to follow the advice and play $(x,a)$ accordingly. 

To verify this amounts to considering the four possible game scenarios characterised by $x$ and $a$, each binary. In each scenario, Bob receives the corresponding BB84 signal state, upon which he is advised to use the suggested CHSH POVMs. Depending on the input bit Bob gets, $y$, the POVMs $M^y_b = \proj{\Phi^y_b}$ read:
\begin{align*}
y = 0 &: \ket{\Phi^0_0} = (\cos\theta) \vert0\rangle + (\sin\theta)\vert 1\rangle,\ \ket{\Phi^0_1} = -(\sin\theta) \vert0\rangle + (\cos\theta)\vert 1\rangle, \\
y = 1 &: \ket{\Phi^1_0} = (\cos\theta) \vert0\rangle - (\sin\theta)\vert 1\rangle,\ \ket{\Phi^1_1} = (\sin\theta) \vert0\rangle + (\cos\theta)\vert 1\rangle.
\end{align*}

From here, one can calculate $P(b \vert y)$ for each $(x,a)$. As the original game only rewards winning cases with equal payoff of unity to both players, calculating Alice's expected payoff when she chooses to play $(x', a')$ is equivalent to summing the probabilities of winning cases.
\begin{figure}[ht]
%\centering
\begin{minipage}[t]{0.48\textwidth}
\centering
%\captionof{table}{$P(b|y)$}
\begin{tabular}{|c|c|c|}
\hline
\diagbox{a}{x} & \textbf{0} & \textbf{1} \\
\hline
\textbf{0} & 
\begin{tabular}{c|c|c}
\diagbox{b}{y} & 0 & 1 \\
\hline
0 & $c^2$ & $c^2$ \\
\hline
1 & $s^2$ & $s^2$
\end{tabular}
&
\begin{tabular}{c|c|c}

\diagbox{b}{y} & 0 & 1  \\
\hline
0 &  $\frac{(c + s)^2}{2}$ & $\frac{(c - s)^2}{2}$\\
\hline
1 &  $\frac{(c - s)^2}{2}$ & $\frac{(c + s)^2}{2}$
\end{tabular}
\\
\hline
\textbf{1} & 
\begin{tabular}{c|c|c}
\diagbox{b}{y} & 0 & 1 \\
\hline
0 & $s^2$ & $s^2$ \\
\hline
1 & $c^2$ & $c^2$
\end{tabular}
&
\begin{tabular}{c|c|c}
\diagbox{b}{y} & 0 & 1  \\
\hline
0 &  $\frac{(c - s)^2}{2}$ & $\frac{(c + s)^2}{2}$\\
\hline
1 &  $\frac{(c + s)^2}{2}$ & $\frac{(c - s)^2}{2}$
\end{tabular}
\\
\hline
\end{tabular}
\end{minipage}\hfill
\begin{minipage}[t]{0.48\textwidth}
\centering
%\captionof{table}{$V(a',x',b,y)$}
\begin{tabular}{|c|c|c|}
\hline
\diagbox{a}{x} & \textbf{0} & \textbf{1} \\
\hline
\textbf{0} & 
\begin{tabular}{c|c|c}
\diagbox{a'}{x'} & 0 & 1 \\
\hline
0 & $c^2$ & $\frac{1}{2}$ \\
\hline
1 & $s^2$ & $\frac{1}{2}$
\end{tabular}
&
\begin{tabular}{c|c|c}
\diagbox{a'}{x'} & 0 & 1  \\
\hline
0 &  $\frac{1}{2}$ & $\frac{(c + s)^2}{2}$\\
\hline
1 &  $\frac{1}{2}$ & $\frac{(c - s)^2}{2}$
\end{tabular}
\\
\hline
\textbf{1} & 
\begin{tabular}{c|c|c}
\diagbox{a'}{x'} & 0 & 1 \\
\hline
0 & $s^2$ & $\frac{1}{2}$\\
\hline
1 & $c^2$ & $\frac{1}{2}$
\end{tabular}
&
\begin{tabular}{c|c|c}
\diagbox{a'}{x'} & 0 & 1  \\
\hline
0 &  $\frac{1}{2}$& $\frac{(c - s)^2}{2}$\\
\hline
1 & $\frac{1}{2}$ & $\frac{(c + s)^2}{2}$
\end{tabular}
\\
\hline
\end{tabular}
\end{minipage}
\caption{\textbf{Left:} $P(b|y)$ calculated for each $(x,a)$ by applying Bob's suggested set of POVMs on the corresponding $\proj{\ph_{a|x}}$. \textbf{Right:} $V(a',x',b,y)$, the expected payoff for Alice when she plays $(x',a')$ while the advice is $(x,a)$. This calculation utilizes the prior (uniform) distribution for Bob's input y: $\text{Pr}(y=0) = \text{Pr}(y=1) = \frac{1}{2}$. 
Notation: $c = \cos\theta$, $s = \sin\theta$, where $\theta = \frac{\pi}{8}$.}
\label{Tab: CHSH_new_payoff}
\end{figure}

We have thus verified that the new advice yields indeed the optimal strategy for Alice, and is hence a quantum correlated equilibrium for the modified game $\widetilde{\text{CHSH}}^\Lambda$.

\medskip
Finally, for the optimisation of correlated equilibria, we cannot rely on Theorem \ref{thm:G-tilde:quantum-equilibrium} as it is not applicable, rather have to do the linear programming from first principles. 
Correlated equilibria concern jointly distributed random variables $X$, $A$ and $F$, where $F:\cY\rightarrow\cB\times\cX$ is a random function that maps the input Bob gets to his suggested action pair $(b, \hat{x})$. For the modified game $\widetilde{\text{CHSH}}^\Lambda$, the parameter space has dimension $63 = 2 \times 2\times 4 \times 4 -1$. As each advice corresponds to a payoff, the process of finding the correlated equilibria is a linear programming problem of verifying the candidate behaviour is optimal for each player: \emph{i.e.} $(x,a,b,\hat{x})$ satisfies
\begin{align*}
\textbf{For Alice: }\ \forall x',a'\ \EE u_A(x,a,b,\hat{x}) &\geq \EE u_A(x',a',b, \hat{x}),\\
\textbf{For Bob: }\ \forall b',\hat{x}'\ \EE u_B(x,a,b,\hat{x}) &\geq \EE u_B(x,a,b', \hat{x}').
\end{align*}
We thus arrive at the following results in Fig.~\ref{fig:CHSH_CE}.
They show the existence of a critical $\Lambda$ above which the social welfare of the correlated equilibria falls to the classical limit $\beta(G)$. For the game $\widetilde{CHSH}^\Lambda$, the limit is $\Lambda \geq \frac{1}{8}$. This can be reasoned by observing that the maximum payoff from $V(a,b,x,y)$ is 1, while Alice plays $x=0$ deterministically, she can control the parity of the game unilaterally, and as long as she stays with one deterministic $a$, Bob only needs to correlate to win the original CHSH game, this gives an overall social welfare of $1-2\Lambda.$ However, as Bob can equally play exactly the same $\hat{x} = 0$, Alice continues to get penalized by $-8\Lambda$, as previously argued, Alice's payoff can not possibly drop below 0, this strategy is only valid up to $\Lambda = \frac{1}{8},$ at which value the social welfare of the correlated equilibria is $1 - \frac{2}{8} = \frac{3}{4} = \beta(G)$.

\begin{figure}[ht]
    \centering
    \includegraphics[width=\linewidth]{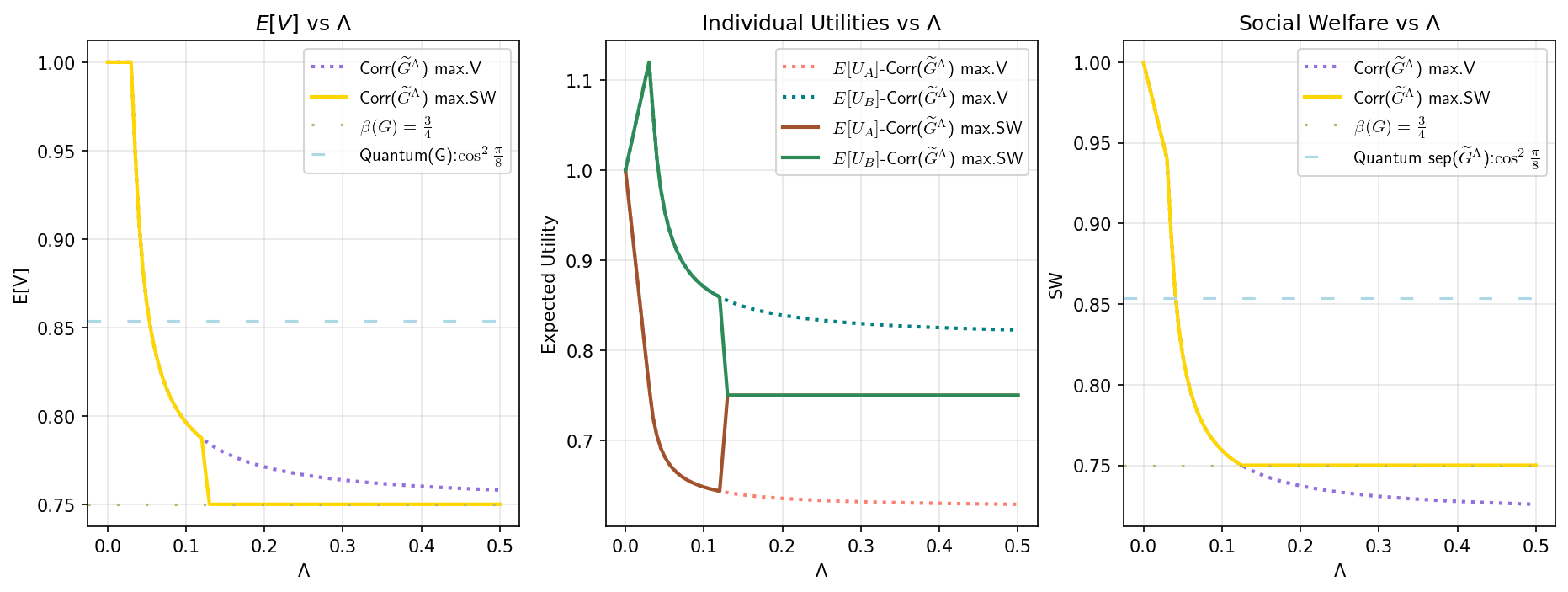}
    \caption{\textbf{Left:} CHSH score of $\widetilde{CHSH}^\Lambda$ correlated equilibria compared to $\beta$(CHSH)$=3/4$ and the quantum bound of $\cos^2\frac{\pi}{8}.$ \textbf{Middle: }expected utilities for Alice and Bob in correlated equilibria of $\widetilde{CHSH}^\Lambda$. \textbf{Right:} Social welfare comparison between equilibria that saturate social welfare and those that maximize the original game score. Refer to the beginning of Section \ref{sec:numerics} for the label convention.}
    \label{fig:CHSH_CE}
\end{figure}

We continue to compare the correlated equilibria of the modified game to its Nash equilibria. For Nash equilibria, the strategies for Alice and Bob are independent, Alice gets $(x,a)$ sampled from a distribution on $4$ points, and Bob his from distributions over functions $f: y \mapsto (b, \hat{x})$, for which there are $16$ possibilities. 
\begin{figure}[ht]
    \centering
     \includegraphics[width=\linewidth]{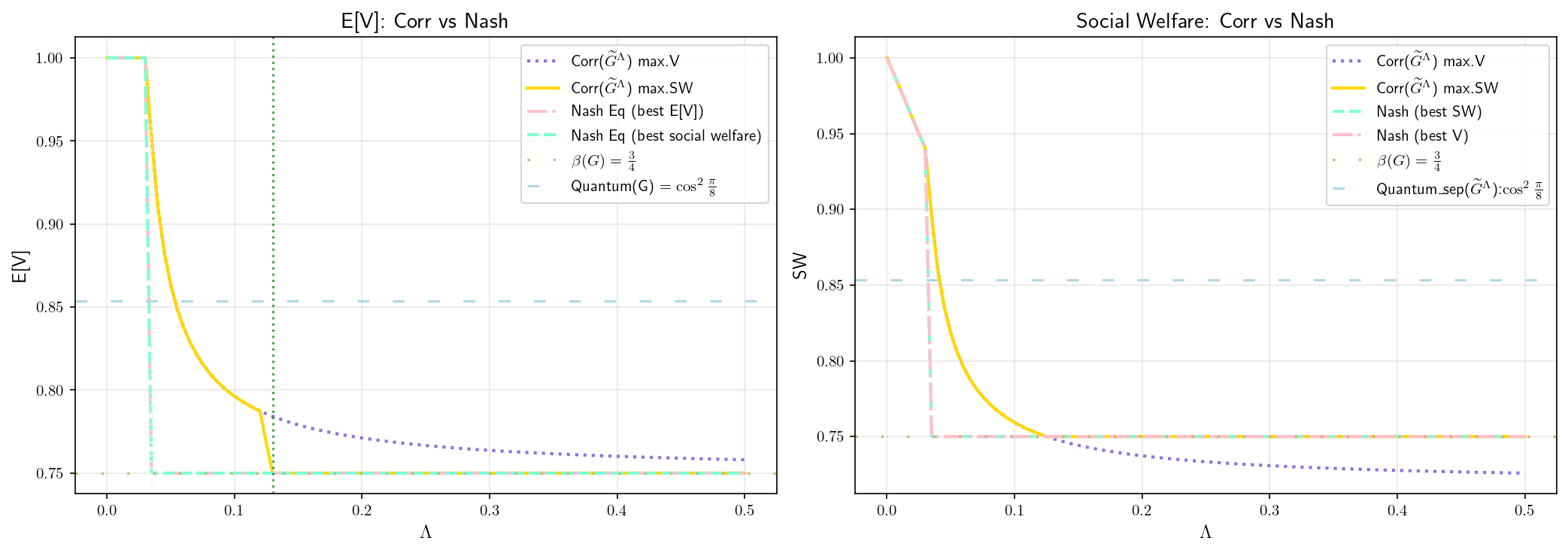}
    \caption{\textbf{Left: }CHSH score for correlated equilibria that saturate $\EE V(a,b,x,y),$ correlated equilibria that saturate social welfare, Nash equilibria that saturate $\EE V(a,b,x,y)$ and Nash equilibria with best social welfare. All four data series are juxtaposed with the classical limit $\frac{3}{4}.$ \textbf{Right: } Social welfare comparison between correlated equilibria and Nash equilibria of the modified game $\widetilde{CHSH}^\Lambda$.}
    \label{fig:CHSH_CE_Nash}
\end{figure}

\section{Conclusions}
\label{sec:conclusions}
We have found, for the first time, a quantum advantage in Bayesian games beyond entangled states, and indeed using barely non-classical cq-states. This marks a fundamental departure from all previous examples of game advantage due to quantum rather than classical correlation: those were all either directly Bell inequalities (cooperative games) or closely related to Bell inequalities in the sense that the types and actions were that of a nonlocal game, the payoff functions slight perturbations of the Bell parameter but such that their sum, the social welfare, was proportional to the original Bell parameter. Also our present examples each derive from a (very particular) nonlocal game, so cheating might be suspected, but the game (types, actions and payoffs) is modified significantly, so as to allow for a very different -- separable -- state to act as quantum advice.

The fact that quantum advantage in competitive games can be provided by separable quantum states, rather than requiring entanglement as in cooperative games, has manifold consequences: to start, it makes the potential realisation of quantum advantage much more accessible since separable states are much easier to manufacture and distribute than entangled states. Also, some of the games and the associated quantum advice are very simple (see the CHSH and GHZ examples), making it more plausible that a natural real-world application can be found. Thirdly, it raises the question of which quantum states can potentially offer a quantum (vs.~classical) advantage in a suitable Bayesian game; at the moment the only states that can be ruled out are the ``classical'' ones which are diagonal in a tensor product of local bases. 

A peculiar feature brought up by the numerical (linear programming) investigations of concrete games is that in those instances our Theorems \ref{thm:G-tilde:quantum-equilibrium} and \ref{thm:G-tilde:quantum-equilibrium:multi} miss something. Rather than the correlated equilibria of $\widetilde{G}^\Lambda$ having social welfare closer and closer to $\beta(G)$ with growing $\Lambda$, there is a cutoff: when $\Lambda\geq \Lambda_0$ the social welfare is simply $\leq \beta(G)$. It would be interesting to know if this is a general phenomenon for all games $G$. 

Further questions include: does every (maximum) Bell violation lead to a competitive game with quantum advantage via a separable state? Is perhaps steering already enough, noting that our instances of separable advice states encode steerable assemblages of quantum states \cite{WJD:steering,CavalcantiSkrzypzyk:steering,Uola-et-al:steering}?
Is it really just the presence of discord as Lowe \cite{Lowe:crazystuff} suggest? The latter seems far-fetched but both steering and non-zero discord are undoubtedly necessary conditions. For a cq-state like our $\omega$ encoding different assemblages, what we need is that one of its bipartite ``mother'' states and the corresponding measurements of Alice generating the assemblages give rise to a nonlocal correlation by complementing measurements of Bob. 
Finally, can we find a ``real-world'' game (\emph{i.e.} not purposely designed) where separable or other quantum states present a tangible quantum advantage in their equilibrium structure?

\acknowledgments 
%The authors are indebted to Hans-J\"urgen Buchner for consistently making the case for Bayesian rationality. 
GS and AW thank Jabir Thayyil for various discussions on different types of correlated equilibria in Bayesian games, which helped shape early forms of the ideas in the present paper. YXW thanks Mees Hendriks for delightful discussions and technical support.
GS was supported by the project
PID2023-146758NB-I00 funded by MICIU/AEI/10.13039/501100011033.
AW was supported by the European Commission QuantERA project ExTRaQT (Spanish MICIN grant no.~PCI2022-132965); by the Spanish MICIN (project PID2022-141283NB-I00) with the support of FEDER funds; 
by the Spanish MICIN with funding from European Union NextGenerationEU (PRTR-C17.I1) and the Generalitat de Catalunya; by the Spanish MTDFP through the QUANTUM ENIA project: Quantum Spain, funded by the European Union NextGenerationEU within the framework of the ``Digital Spain 2026 Agenda''; and by the Alexander von Humboldt Foundation.

%%%%%%%%%%%%%%%%%%%%%%%%%%%%%%%%%%%%%%%%%%%
%%% TO DO: Put references into bib file %%%
%%%%%%%%%%%%%%%%%%%%%%%%%%%%%%%%%%%%%%%%%%%

\vfill\pagebreak
\appendix

\section{Updates on the hierarchy of legitimate notions of\protect\\ correlated equilibrium in Bayesian games}
\label{app:legitimate}
Forges, in her articles on correlated equilibria in Bayesian games \cite{Forges:5-legitimate-defs,Forges:corr-eq-revisited}, traces four different formalisations of Bayesian games leading to equivalent notions of Nash equilibrium, but different (though related) concepts of correlated equilibrium, and proposes a fifth. 

%\textcolor{blue}{I (AW) find it hard to parse some of the distinctions made in the earlier paper (the later is better, and even better \cite{Lehrer-et-al:signalling}), but in our present framework we find eleven different notions, one of which corresponds to analysing the game in agent normal form; all of which collapse to the usual Aumann correlated equilibrium for games of complete information \cite{Zhang:games}.}

After considering carefully quantum advice, both separable and entangled, we now arrive at eleven(!) distinct notions of correlated equilibrium, which form partial hierarchies and which are separated from each other by examples from the previous literature or from our current paper. The mechanisms to distribute the correlation are different, but we present here unified definitions in terms of the game in extensive form, whereby the players have (free or regulated) access to distinct ``devices'' after learning their type and before having to announce their action. Each equilibrium gives rise to a joint behaviour $Q(a|t)$, and this allows us to compare, for a given game $G$, the eleven distinct sets of equilibrium behaviours, as convex subsets of $\textbf{ALL}(\cA|\cT)$.
This is a bit more refined than the attention to payoff vectors, which arise as linear functions of $Q$, and directly generalises the polytope of correlated equilibria according to Aumann in games of complete information, which are included as the case of trivial types $\cT_i = \{\ast\}$. It will be noted that in this case, all eleven sets of (behaviours of) correlated equilibria coincide, and yield the very polytope of correlated equilibria according to Aumann (see also \cite{Zhang:games}). 

\begin{figure}[ht]
\begin{center}
\begin{tabular}{ccccc}
  Classically Corr.E. & $\stackrel{\not\Leftarrow}{\Longrightarrow}$ & Comm.E.: local $Q$ & $\stackrel{\not\Leftarrow}{\Longrightarrow}$ & Comm.E.: local $Q$, honest users \\
  $\Downarrow$ $\not\Uparrow$ & & & & \\
  Qu. separable Corr.E. & $\nearrow$ & $\Downarrow$ $\not\Uparrow$ & & $\Downarrow$ $\not\Uparrow$ \\
  $\Downarrow$ $\not\Uparrow$ & & & & \\
  Quantum Corr.E. & $\stackrel{\not\Leftarrow}{\Longrightarrow}$ & Comm.E.: quantum $Q$ & $\stackrel{\not\Leftarrow}{\Longrightarrow}$ & Comm.E.: quantum $Q$, honest users \\
  & & $\Downarrow$ $\not\Uparrow$ & & $\Downarrow$ $\not\Uparrow$ \\
  & & Belief-inv. Comm.E. & & Belief-inv. Comm.E., honest users \\
  & & $\Downarrow$ $\not\Uparrow$ & & $\Downarrow$ $\not\Uparrow$ \\
  & & General Comm.E. & & General Comm.E., honest users 
\end{tabular}
\end{center}
\caption{Eleven more or less legitimate notions of correlated equilibrium in Bayesian games.}
\label{fig:legitimate}
\end{figure}

Before we start, we give a little preview: we distinguish now three kinds of ``autonomous'' advice (classical, separable quantum, and general entangled quantum correlation), which are characterised by the feature that the advice can be given to the players even before they learn their types; four types of communication device (local, quantum, no-signalling, and general), which are accessed by the players as black boxes -- in particular players may query them with any input, not necessarily their individual type, and use the output as they see fit; and the same four types, but the players have to input their true type into the black box (alternatively modelled as an omniscient mediator who learns the players' types when they do), but as before they may use the output as they like. Within each of these three classes indicated by the letters A (autonomous), B (behaviours) and H (honest users), the advice is increasingly general, and an equilibrium in one class implies a corresponding equilibrium in the next class due to the relaxation of the equilibrium condition; 
see Fig. \ref{fig:legitimate}. 

\begin{itemize}
\item[\textbf{A1}] The set $\text{Corr}(G)$ of classically correlated equilibria, which are described by ordinary Aumann correlated equilibria in the strategic normal form of $G$, where each player's strategy space is simply $\cS_i = \cA_i^{\cT_i}$. Note however that here we only consider the  associated behaviour in $\textbf{LO}(\cA|\cT)$. All other notions of correlated equilibrium considered have to be described in the game's extensive form. 

\item[\textbf{A2}] The set of quantum correlated equilibria with a separable quantum state, which we might denote $\text{Qu}_{\text{SEP}}(G)$. It contains $\text{Corr}(G)$, the classically correlated equilibria, but can be strictly larger as we showed here. 

\item[\textbf{A3}] The set $\text{Qu}(G)$ of all quantum correlated equilibria is strictly larger than $\text{Qu}_{\text{SEP}}(G)$ as shown by Bell inequalities \cite{Bell:inequality,CHSH,GHZ}, or in competitive games by the examples of La Mura \cite{LaMura:game+quantum}, Pappa \emph{et al.} \cite{Pappa:CHSH-game}, and subsequent ones, cf. \cite{BigGame}. 

\item[\textbf{B1}] The set of communication equilibria with local behaviour, \emph{i.e.} $\text{Comm}(G) \cap \textbf{LO}(\cA|\cT)$. As observed here, it contains $\text{Corr}(G)$ but can be strictly larger. It also contains $\text{Qu}_{\text{SEP}}(G)$, but it is an open question whether it can be strictly larger. 

\item[\textbf{B2}] The set of communication equilibria with quantum behaviour, \emph{i.e.} $\text{Comm}(G) \cap \textbf{Q}(\cA|\cT)$. It contains $\text{Qu}(G)$ but can be strictly larger than that, as shown by Abbott \emph{et al.} \cite{AMP:corr-vs-comm}. Bell inequalities show that in general, $\text{Qu}(G)$ and hence also $\text{Comm}(G) \cap \textbf{Q}(\cA|\cT)$ contains points outside $\text{Comm}(G) \cap \textbf{LO}(\cA|\cT)$. 

\item[\textbf{B3}] The set of belief-invariant communication equilibria, $\text{BI}(G) = \text{Comm}(G) \cap \textbf{BINV}(\cA|\cT)$. This is typically larger than $\text{Comm}(G) \cap \textbf{Q}(\cA|\cT)$ via the violation of Tsirelson bounds by no-signalling correlations \cite{Tsirelson:bound,PopescuRohrlich:no-signalling}. It might be that Theorem \ref{thm:G-tilde:local-comm-equilibrium} provides examples of separation for competitive games if we could upper-bound the social welfare of $\widetilde{G}^\Lambda$ under quantum correlated equilibria by $\tau(G)+\varepsilon$.

\item[\textbf{B4}] The set of general communication equilibria, $\text{Comm}(G)$, where we now allow behaviours $Q$ that are potentially signalling between the players. 

\item[\textbf{H1}] The set of communication equilibria with local behaviour, but where the players honestly reveal their type to the communication device (equivalently, the mediator has prior knowledge of the types). Forges \cite{Forges:5-legitimate-defs,Forges:corr-eq-revisited} showed that this is in general strictly larger than $\text{Comm}(G) \cap \textbf{LO}(\cA|\cT)$. This concept of correlated equilibrium is equivalent to that arising from the \emph{agent normal form} of the game, which is a game of complete information but with a larger number of players, cf.~\cite{Harsanyi:anf,Forges:5-legitimate-defs}.

\item[\textbf{H2}] The set of communication equilibria with quantum behaviour, but where the players honestly reveal their type to the communication device. Again, Bell inequalities with quantum violation show that this can be strictly larger than the previous set (with local behaviours).

\item[\textbf{H3}] The set of communication equilibria with no-signalling behaviour, but where the players honestly reveal their type to the communication device. The examples in Forges \cite{Forges:5-legitimate-defs,Forges:corr-eq-revisited} show that this can be strictly larger than the analogue set with local behaviours. In fact, Tsirelson inequalities with no-signalling violation show a separation from the previous set (with quantum behaviours). 

\item[\textbf{H4}] The set of communication equilibria with general behaviour, but where the players honestly reveal their type to the communication device. Forges \cite{Forges:5-legitimate-defs,Forges:corr-eq-revisited} shows that this can be strictly larger than the previous set (with belief-invariant/no-signalling behaviours).

\end{itemize}

\end{document}